\documentclass[12pt]{article}
\usepackage[latin1]{inputenc}
\usepackage[british]{babel}
\usepackage{cmap}
\usepackage{lmodern}

\usepackage{amssymb, amsmath, amsthm}
\usepackage[a4paper,top=25mm,bottom=25mm,left=25mm,right=25mm]{geometry}
\usepackage{etex}
\usepackage{ragged2e}

\usepackage{authblk} 
\usepackage{pifont}
\usepackage{graphicx}
\usepackage[usenames,dvipsnames,svgnames,table]{xcolor}
\usepackage[figuresright]{rotating}
\usepackage{xtab} 
\usepackage{longtable} 
\usepackage{multirow}
\usepackage{footnote}
\usepackage[stable]{footmisc}
\usepackage{chngpage} 
\usepackage{pdflscape} 
\usepackage{tocbibind} 

\usepackage{pgfplots}
\pgfplotsset{compat=1.14}
\usepackage{setspace}

\makesavenoteenv{tabular}
\usepackage{tabularx}
\usepackage{booktabs}
\usepackage{threeparttable}
\usepackage[referable]{threeparttablex} 
\newcolumntype{R}{>{\raggedleft\arraybackslash}X}
\newcolumntype{L}{>{\raggedright\arraybackslash}X}
\newcolumntype{C}{>{\centering\arraybackslash}X}
\newcolumntype{A}{>{\columncolor{gray!25}}C}
\newcolumntype{a}{>{\columncolor{gray!25}}c}

\newlength{\tablen}

\usepackage{dcolumn} 
\newcolumntype{.}{D{.}{.}{-1}}

\usepackage{tikz}
\usetikzlibrary{arrows, calc, matrix, patterns, positioning, trees}
\usepackage[semicolon]{natbib}
\usepackage[hyphens]{url}
\usepackage{hyperref} 
\hypersetup{
  colorlinks   = true,    
  urlcolor     = blue,    
  linkcolor    = blue,    
  citecolor    = red      
}
\usepackage{microtype}
\usepackage[justification=centering]{caption} 

\usepackage[labelformat=simple]{subcaption}

\DeclareCaptionLabelFormat{parenthesis}{(#2)}
\captionsetup[subfigure]{labelformat=parenthesis,font+=small,list=false}
\makeatletter
\renewcommand\p@subfigure{\arabic{figure}.}
\makeatother

\DeclareCaptionLabelFormat{parenthesis}{(#2)}
\captionsetup[subtable]{labelformat=parenthesis,font+=small,list=false}
\makeatletter
\renewcommand\p@subtable{\arabic{table}.}
\makeatother

\usepackage{enumitem}

\setlist[itemize]{leftmargin=2.5\parindent}
\setlist[enumerate]{leftmargin=2.5\parindent}

\theoremstyle{plain}

\newtheorem{corollary}{Corollary}[section]
\newtheorem{lemma}{Lemma}[section]
\newtheorem{proposition}{Proposition}[section]
\newtheorem{theorem}{Theorem}[section]

\theoremstyle{definition}
\newtheorem{axiom}{Axiom}[section]

\newtheorem{definition}{Definition}[section]
\newtheorem{example}{Example}[section]

\theoremstyle{remark}
\newtheorem{notation}{Notation}[section]


\def\keywords{\vspace{.5em} 
{\noindent \textit{Keywords}:\,}}

\def\JEL{\vspace{.5em} 
{\noindent \textbf{\emph{JEL} classification number}:\,}}

\def\AMS{\vspace{.5em} 
{\noindent \textbf{\emph{MSC} class}:\,}}

\author{\href{https://sites.google.com/site/laszlocsato87}{L\'aszl\'o Csat\'o}\thanks{~e-mail: laszlo.csato@uni-corvinus.hu} }
\affil{Institute for Computer Science and Control, Hungarian Academy of Sciences (MTA SZTAKI) \\
Laboratory on Engineering and Management Intelligence, Research Group of Operations Research and Decision Systems}
\affil{Corvinus University of Budapest (BCE) \\
Department of Operations Research and Actuarial Sciences}
\affil{Budapest, Hungary}
\title{An impossibility theorem for paired comparisons}
\date{\today}

\begin{document}

\maketitle

\begin{abstract}
In several decision-making problems, alternatives should be ranked on the basis of paired comparisons between them. We present an axiomatic approach for the universal ranking problem with arbitrary preference intensities, incomplete and multiple comparisons. In particular, two basic properties -- independence of irrelevant matches and self-consistency -- are considered. It is revealed that there exists no ranking method satisfying both requirements at the same time. The impossibility result holds under various restrictions on the set of ranking problems, however, it does not emerge in the case of round-robin tournaments. An interesting and more general possibility result is obtained by restricting the domain of independence of irrelevant matches through the concept of macrovertex.

\JEL{C44, D71}

\AMS{15A06, 91B14}

\keywords{preference aggregation; paired comparison; tournament ranking; axiomatic approach; impossibility}
\end{abstract}

\section{Introduction}

Paired-comparison based ranking emerges in many fields of science such as social choice theory \citep{ChebotarevShamis1998a}, sports \citep{Landau1895, Landau1914, Zermelo1929, Radicchi2011, BozokiCsatoTemesi2016, ChaoKouLiPeng2018}, or psychology \citep{Thurstone1927}. Here a general version of the problem, allowing for different preference intensities (including ties) as well as incomplete and multiple comparisons between two objects, is addressed.

The paper contributes to this field by the formulation of an impossibility theorem: it turns out that two axioms, independence of irrelevant matches -- used, among others, in characterizations of Borda ranking by \citet{Rubinstein1980} and \citet{NitzanRubinstein1981} and recently discussed by \citet{Gonzalez-DiazHendrickxLohmann2013} -- and self-consistency -- a less known but intuitive property, introduced in \citet{ChebotarevShamis1997a} -- cannot be satisfied at the same time.
We also investigate domain restrictions and the weakening of the properties in order to get some positive results.

Our main theorem reinforces that while the row sum (sometimes called Borda or score) ranking has favourable properties in the case of round-robin tournaments, its application can be attacked when incomplete comparisons are present. A basket case is a Swiss-system tournament, where row sum seems to be a bad choice since players with weaker opponents can score the same number of points more easily \citep{Csato2013a, Csato2017c}.

The current paper can be regarded as a supplement to the findings of previous axiomatic discussions in the field \citep{AltmanTennenholtz2008, ChebotarevShamis1998a, Gonzalez-DiazHendrickxLohmann2013, Csato2018g} by highlighting some unknown connections among certain axioms.
Furthermore, our impossibility result gives mathematical justification for a comment appearing in the axiomatic analysis of scoring procedures by \citet{Gonzalez-DiazHendrickxLohmann2013}: 'when players have different opponents (or face opponents with different intensities), $IIM$\footnote{~$IIM$ is the abbreviation of independence of irrelevant matches, an axiom to be discussed in Section~\ref{Sec31}.} is a property one would rather not have' (p.~165). The strength of this property is clearly shown by our main theorem.

The study is structured as follows. Section~\ref{Sec2} presents the setting of the ranking problem and defines some ranking methods. In Section~\ref{Sec3}, two axioms are evoked in order to get a clear impossibility result.
Section~\ref{Sec4} investigates different ways to achieve possibility through the weakening of the axioms. Finally, some concluding remarks are given in Section~\ref{Sec5}.

\section{Preliminaries} \label{Sec2}

Consider a set of professional tennis players and their results against each other \citep{BozokiCsatoTemesi2016}. The problem is to rank them, which can be achieved by associating a score with each player. This section describes a possible mathematical model and introduces some methods.

\subsection{The ranking problem} \label{Sec21}

Let $N = \{ X_1,X_2, \dots, X_n \}$, $n \in \mathbb{N}$ be the \emph{set of objects} and $T = \left[ t_{ij} \right] \in \mathbb{R}^{n \times n}$ be a \emph{tournament matrix} such that $t_{ij} + t_{ji} \in \mathbb{N}$.
$t_{ij}$ represents the aggregated score of object $X_i$ against $X_j$, $t_{ij} / (t_{ij} + t_{ji})$ can be interpreted as the likelihood that object $X_i$ is better than object $X_j$. $t_{ii} = 0$ is assumed for all $X_i \in N$.
Possible derivations of the tournament matrix can be found in \citet{Gonzalez-DiazHendrickxLohmann2013} and \citet{Csato2015a}.

The pair $(N,T)$ is called a \emph{ranking problem}.
The set of ranking problems with $n$ objects ($|N| = n$) is denoted by $\mathcal{R}^n$.

A \emph{scoring procedure} $f$ is an $\mathcal{R}^n \to \mathbb{R}^n$ function that gives a rating $f_i(N,T)$ for each object $X_i \in N$ in any ranking problem $(N,T) \in \mathcal{R}^n$. Any scoring method immediately induces a ranking (a transitive and complete weak order on the set of $N \times N$) $\succeq$ by $f_i(N,T) \geq f_j(N,T)$ meaning that $X_i$ is ranked weakly above $X_j$, denoted by $X_i \succeq X_j$. The symmetric and asymmetric parts of $\succeq$ are denoted by $\sim$ and $\succ$, respectively: $X_i \sim X_j$ if both $X_i \succeq X_j$ and $X_i \preceq X_j$ hold, while $X_i \succ X_j$ if $X_i \succeq X_j$ holds, but $X_i \preceq X_j$ does not hold.
Every scoring method can be considered as a \emph{ranking method}. This paper discusses only ranking methods induced by scoring procedures.

A ranking problem $(N,T)$ has the skew-symmetric \emph{results matrix} $R = T - T^\top = \left[ r_{ij} \right] \in \mathbb{R}^{n \times n}$ and the symmetric \emph{matches matrix} $M = T + T^\top = \left[ m_{ij} \right] \in \mathbb{N}^{n \times n}$ such that $m_{ij}$ is the number of the comparisons between $X_i$ and $X_j$, whose outcome is given by $r_{ij}$. Matrices $R$ and $M$ also determine the tournament matrix as $T = (R + M)/2$.
In other words, a ranking problem $(N,T) \in \mathcal{R}^n$ can be denoted analogously by $(N,R,M)$ with the restriction $|r_{ij}| \leq m_{ij}$ for all $X_i,X_j \in N$, that is, the outcome of any paired comparison between two objects cannot 'exceed' their number of matches.
Although the description through results and matches matrices is not parsimonious, usually the notation $(N,R,M)$ will be used because it helps in the axiomatic approach.

The class of universal ranking problems has some meaningful subsets.
A ranking problem $(N,R,M) \in \mathcal{R}^n$ is called:
\begin{itemize}
\item
\emph{balanced} if $\sum_{X_k \in N} m_{ik} = \sum_{X_k \in N} m_{jk}$ for all $X_i,X_j \in N$. \\
The set of balanced ranking problems is denoted by $\mathcal{R}_{B}$.
\item
\emph{round-robin} if $m_{ij} = m_{k \ell}$ for all $X_i \neq X_j$ and $X_k \neq X_\ell$. \\
The set of round-robin ranking problems is denoted by $\mathcal{R}_{R}$.
\item
\emph{unweighted} if $m_{ij} \in \{ 0; 1 \}$ for all $X_i,X_j \in N$. \\
The set of unweighted ranking problems is denoted by $\mathcal{R}_{U}$.
\item
\emph{extremal} if $|r_{ij}| \in \{ 0; m_{ij} \}$ for all $X_i,X_j \in N$. \\
The set of extremal ranking problems is denoted by $\mathcal{R}_{E}$.
\end{itemize}

The first three subsets pose restrictions on the matches matrix $M$.
In a balanced ranking problem, all objects should have the same number of comparisons. A typical example is a Swiss-system tournament (provided the number of participants is even).
In a round-robin ranking problem, the number of comparisons between any pair of objects is the same. A typical example (of double round-robin) can be the qualification for soccer tournaments like UEFA European Championship \citep{Csato2018b}. It does not allow for incomplete comparisons.
Note that a round-robin ranking problem is balanced, $\mathcal{R}_{R} \subset \mathcal{R}_{B}$.
Finally, in an unweighted ranking problem, multiple comparisons are prohibited.

Extremal ranking problems restrict the results matrix $R$: the outcome of a comparison can only be a complete win ($r_{ij} = m_{ij}$), a draw  ($r_{ij} = 0$), or a maximal loss ($r_{ij} = -m_{ij}$). In other words, preferences have no intensity, however, ties are allowed.

One can also consider any intersection of these special classes.

The \emph{number of comparisons} of object $X_i \in N$ is $d_i = \sum_{X_j \in N} m_{ij}$ and the \emph{maximal number of comparisons} in the ranking problem is $m = \max_{X_i,X_j \in N} m_{ij}$. Hence:
\begin{itemize}
\item
A ranking problem is balanced if and only if $d_i = d$ for all $X_i \in N$.
\item
A ranking problem is round-robin if and only if $m_{ij} = m$ for all $X_i,X_j \in N$.
\item
A ranking problem is unweighted if and only if $m = 1$.\footnote{~While $m_{ij} \in \{ 0; 1 \}$ for all $X_i,X_j \in N$ allows for $m=0$, it leads to a meaningless ranking problem without any comparison.}
\end{itemize}

Matrix $M$ can be represented by an undirected multigraph $G := (V,E)$, where the vertex set $V$ corresponds to the object set $N$, and the number of edges between objects $X_i$ and $X_j$ is equal to $m_{ij}$, so the degree of node $X_i$ is $d_i$.
Graph $G$ is said to be the \emph{comparison multigraph} of the ranking problem $(N,R,M)$, and is independent of the results matrix $R$. The \emph{Laplacian matrix} $L = \left[ \ell_{ij} \right] \in \mathbb{R}^{n \times n}$ of graph $G$ is given by $\ell_{ij} = -m_{ij}$ for all $X_i \neq X_j$ and $\ell_{ii} = d_i$ for all $X_i \in N$.

A ranking problem $(N,R,M) \in \mathcal{R}^n$ is called \emph{connected} or \emph{unconnected} if its comparison multigraph is connected or unconnected, respectively.

\subsection{Some ranking methods} \label{Sec22}

In the following, some scoring procedures are presented. They will be used only for ranking purposes, so they can be called ranking methods.

Let $\mathbf{e} \in \mathbb{R}^n$ denote the column vector with $e_i = 1$ for all $i = 1,2, \dots ,n$.
Let $I \in \mathbb{R}^{n \times n}$ be the identity matrix.

The first scoring method does not take the comparison structure into account, it simply sums the results from the results matrix $R$.

\begin{definition} \label{Def21}
\emph{Row sum}: $\mathbf{s}(N,R,M) = R \mathbf{e}$.
\end{definition}

The following \emph{parametric} procedure has been constructed axiomatically by \citet{Chebotarev1989_eng} as an extension of the row sum method to the case of paired comparisons with missing values, and has been thoroughly analysed in \citet{Chebotarev1994}.

\begin{definition} \label{Def22}
\emph{Generalized row sum}: it is the unique solution $\mathbf{x}(\varepsilon)(N,R,M)$ of the system of linear equations $(I+ \varepsilon L) \mathbf{x}(\varepsilon)(N,R,M) = (1 + \varepsilon m n) \mathbf{s}(N,R,M)$, where $\varepsilon > 0$ is a parameter. 
\end{definition}

Generalized row sum adjusts the row sum $s_i$ by accounting for the performance of objects compared with $X_i$, and adds an infinite depth to the correction as the row sums of all objects available on a path from $X_i$ appear in the calculation. $\varepsilon$ indicates the importance attributed to this modification.
Note that generalized row sum results in row sum if $\varepsilon \to 0$: $\lim_{\varepsilon \to 0} \mathbf{x}(\varepsilon)(N,R,M) = \mathbf{s}(N,R,M)$.

The row sum and generalized row sum rankings are unique and easily computable from a system of linear equations for all ranking problems $(N,R,M) \in \mathcal{R}^n$.


The least squares method was suggested by \citet{Thurstone1927} and \citet{Horst1932}.
It is known as logarithmic least squares method in the case of incomplete multiplicative pairwise comparison matrices \citep{BozokiFulopRonyai2010}.

\begin{definition} \label{Def23}
\emph{Least squares}: it is the solution $\mathbf{q}(N,R,M)$ of the system of linear equations $L \mathbf{q}(N,R,M) = \mathbf{s}(N,R,M)$ and $\mathbf{e}^\top \mathbf{q}(N,R,M) = 0$.
\end{definition}

Generalized row sum ranking coincides with least squares ranking if $\varepsilon \to \infty$ because $\lim_{\varepsilon \to \infty} \mathbf{x}(\varepsilon)(N,R,M) = mn \mathbf{q}(N,R,M)$.

The least squares ranking is unique if and only if the ranking problem $(N,R,M) \in \mathcal{R}^n$ is connected \citep{KaiserSerlin1978, ChebotarevShamis1999, BozokiFulopRonyai2010}.
The ranking of unconnected objects may be controversial. Nonetheless, the least squares ranking can be made unique if Definition~\ref{Def23} is applied to all ranking subproblems with a connected comparison multigraph.

An extensive analysis and a graph interpretation of the least squares method, as well as further references, can be found in \citet{Csato2015a}.

\section{The impossibility result} \label{Sec3}

In this section, a natural axiom of independence and a kind of monotonicity property is recalled. 
Our main result illustrates the impossibility of satisfying the two requirements simultaneously.

\subsection{Independence of irrelevant matches} \label{Sec31}

This property appears as \emph{independence} in \citet[Axiom~III]{Rubinstein1980} and \citet[Axiom~5]{NitzanRubinstein1981} in the case of round-robin ranking problems. The name independence of irrelevant matches has been used by \citet{Gonzalez-DiazHendrickxLohmann2013}.
It deals with the effects of certain changes in the tournament matrix.

\begin{axiom} \label{Axiom31}
\emph{Independence of irrelevant matches} ($IIM$):
Let $(N,T),(N,T') \in \mathcal{R}^n$ be two ranking problems and $X_i,X_j,X_k, X_\ell \in N$ be four different objects such that $(N,T)$ and $(N,T')$ are identical but $t'_{k \ell} \neq t_{k \ell}$.
Scoring procedure $f: \mathcal{R}^n \to \mathbb{R}^n$ is called \emph{independent of irrelevant matches} if $f_i(N,T) \geq f_j(N,T) \Rightarrow f_i(N,T') \geq f_j(N,T')$.
\end{axiom}

$IIM$ means that 'remote' comparisons -- not involving objects $X_i$ and $X_j$ -- do not affect the order of $X_i$ and $X_j$.
Changing the matches matrix may lead to an unconnected ranking problem.
Property $IIM$ has a meaning if $n \geq 4$.

Sequential application of independence of irrelevant matches can lead to any ranking problem $(N,\bar{T}) \in \mathcal{R}^n$, for which $\bar{t}_{gh} = t_{gh}$ if $\{ X_g,X_h \} \cap \{ X_i, X_j \} \neq \emptyset$, but all other paired comparisons are arbitrary.

\begin{lemma} \label{Lemma31}
The row sum method is independent of irrelevant matches.
\end{lemma}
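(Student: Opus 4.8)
The plan is to unwind the definition of the row sum method and check that the defining inequality of $IIM$ is trivially preserved. Recall from Definition~\ref{Def21} that $\mathbf{s}(N,R,M) = R\mathbf{e}$, so the rating of object $X_i$ is $s_i(N,R,M) = \sum_{X_p \in N} r_{ip}$. The first step is to translate the hypothesis of Axiom~\ref{Axiom31}, stated in terms of the tournament matrices $T$ and $T'$, into the results-matrix language: since $(N,T)$ and $(N,T')$ agree except in entry $t_{k\ell}$ (and, implicitly, its transpose partner $t_{\ell k}$, as $T$ and $T'$ are genuine tournament matrices with prescribed diagonal), the corresponding results matrices $R = T - T^\top$ and $R' = T' - (T')^\top$ satisfy $r'_{gh} = r_{gh}$ for every pair $\{X_g, X_h\}$ other than $\{X_k, X_\ell\}$.

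The key observation is then that the row sums of $X_i$ and $X_j$ are unchanged. Indeed, $s_i(N,R',M') = \sum_{X_p \in N} r'_{ip} = \sum_{X_p \in N} r_{ip} = s_i(N,R,M)$, because the only entries in which $R$ and $R'$ differ are $r_{k\ell}$ and $r_{\ell k}$, and since $X_i$ is distinct from both $X_k$ and $X_\ell$ (the four objects $X_i, X_j, X_k, X_\ell$ are pairwise different by hypothesis), neither of those entries appears in the $i$-th row. The identical argument applies to $X_j$. Hence $s_i(N,R,M) \geq s_j(N,R,M)$ immediately implies $s_i(N,R',M') \geq s_j(N,R',M')$, which is exactly the conclusion required by $IIM$.

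There is essentially no obstacle here; the only point demanding a modicum of care is the bookkeeping between the two equivalent descriptions of a ranking problem — verifying that perturbing a single off-diagonal entry $t_{k\ell}$ of the tournament matrix affects only the $(k,\ell)$ and $(\ell,k)$ entries of the results matrix, and that this localized change leaves the $i$-th and $j$-th rows of $R$ untouched. Once that is spelled out, the conclusion follows directly, with no dependence on the matches matrix $M$ at all.
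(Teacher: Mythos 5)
Your proof is correct and takes the same route as the paper, which simply states that the claim follows from Definition~\ref{Def21}; you have spelled out the underlying observation that perturbing $t_{k\ell}$ (and $t_{\ell k}$) leaves the $i$-th and $j$-th rows of $R$, and hence the row sums $s_i$ and $s_j$, unchanged.
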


\begin{proof}
It follows from Definition~\ref{Def21}.
\end{proof}
 
\subsection{Self-consistency} \label{Sec32}

The next axiom, introduced by \citet{ChebotarevShamis1997a}, may require an extensive explanation.
It is motivated by an example using the language of preference aggregation.

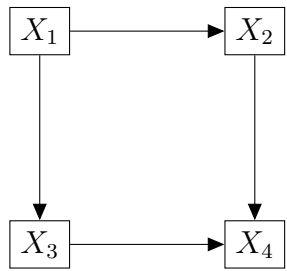
\begin{figure}[htbp]
\centering
\caption{The ranking problem of Example~\ref{Examp31}}
\label{Fig31}
\begin{tikzpicture}[scale=1, auto=center, transform shape, >=triangle 45]
\tikzstyle{every node}=[draw,shape=rectangle];
  \node (n1) at (135:2) {$X_1$};
  \node (n2) at (45:2)  {$X_2$};
  \node (n3) at (225:2) {$X_3$};
  \node (n4) at (315:2) {$X_4$};

  \foreach \from/\to in {n1/n2,n1/n3,n2/n4,n3/n4}
    \draw [->] (\from) -- (\to);
\end{tikzpicture}
\end{figure}

\begin{example} \label{Examp31}
Consider the ranking problem $(N,R,M) \in \mathcal{R}_B^4 \cap \mathcal{R}_U^4 \cap \mathcal{R}_E^4$ with results and matches matrices
\[
R = \left[
\begin{array}{cccc}
    0     & 1     & 1     & 0 \\
    -1    & 0     & 0     & 1 \\
    -1    & 0     & 0     & 1 \\
    0     & -1    & -1    & 0 \\
\end{array}
\right] \text{ and }
M = \left[
\begin{array}{cccc}
    0     & 1     & 1     & 0 \\
    1     & 0     & 0     & 1 \\
    1     & 0     & 0     & 1 \\
    0     & 1     & 1     & 0 \\
\end{array}
\right].
\]
It is shown in Figure~\ref{Fig31}: a directed edge from node $X_i$ to $X_j$ indicates a complete win of $X_i$ over $X_j$ (and a complete loss of $X_j$ against $X_i$).
This representation will be used in further examples, too.
\end{example}

The situation in Example~\ref{Examp31} can be interpreted as follows. A voter prefers alternative $X_1$ to $X_2$ and $X_3$, but says nothing about $X_4$. Another voter prefers $X_2$ to $X_3$ and $X_4$, but has no opinion on $X_1$.

Although it is difficult to make a good decision on the basis of such incomplete preferences, sometimes it cannot be avoided. It leads to the question, which principles should be followed by the final ranking of the objects. It seems reasonable that $X_i$ should be judged better than $X_j$ if one of the following holds:
\begin{enumerate}[label=\ding{64}\arabic*]
\item \label{SC_con1}
$X_i$ achieves better results against the same objects;
\item \label{SC_con2}
$X_i$ achieves better results against objects with the same strength;
\item \label{SC_con3}
$X_i$ achieves the same results against stronger objects;
\item \label{SC_con4}
$X_i$ achieves better results against stronger objects.
\end{enumerate}
Furthermore, $X_i$ should have the same rank as $X_j$ if one of the following holds:
\begin{enumerate}[resume,label=\ding{64}\arabic*]
\item \label{SC_con5}
$X_i$ achieves the same results against the same objects;
\item \label{SC_con6}
$X_i$ achieves the same results against objects with the same strength.
\end{enumerate}

In order to apply these principles, one should measure the strength of objects. It is provided by the scoring method itself, hence the name of this axiom is \emph{self-consistency}.
Consequently, condition~\ref{SC_con1} is a special case of condition~\ref{SC_con2} (the same objects have naturally the same strength) as well as condition~\ref{SC_con5} is implied by condition~\ref{SC_con6}.

What does self-consistency mean in Example~\ref{Examp31}?
First, $X_2 \sim X_3$ due to condition~\ref{SC_con5}.
Second, $X_1 \succ X_4$ should hold since condition~\ref{SC_con1} as $r_{12} > r_{42}$ and $r_{13} > r_{43}$.
The requirements above can also be applied to objects which have different opponents.
Assume that $X_1 \preceq X_2$. Then condition~\ref{SC_con4} results in $X_1 \succ X_2$ because of $X_2 \succeq X_1$, $r_{12} > r_{21}$ and $X_3 \sim X_2 \succeq X_1 \succ X_4$, $r_{13} = r_{24}$. It is a contradiction, therefore $X_1 \succ (X_2 \sim X_3)$.
Similarly, assume that $X_2 \preceq X_4$. Then condition~\ref{SC_con4} results in $X_2 \succ X_4$ because of $X_1 \succ X_3$ (derived above), $r_{21} = r_{43}$ and $X_4 \succeq X_2 \sim X_3$, $r_{24} > r_{43}$. It is a contradiction, therefore $(X_2 \sim X_3) \succ X_4$.
To summarize, only the ranking $X_1 \succ (X_2 \sim X_3) \succ X_4$ is allowed by self-consistency.

The above requirement can be formalized in the following way.

\begin{definition} \label{Def31}
\emph{Opponent set}:
Let $(N,R,M) \in \mathcal{R}_U^n$ be an unweighted ranking problem. The \emph{opponent set} of object $X_i$ is $O_i = \{ X_j: m_{ij} = 1 \}$
\end{definition}

Objects of the opponent set $O_i$ are called the \emph{opponents} of $X_i$.
Note that $|O_i| = |O_j|$ for all $X_i, X_j \in N$ if and only if the ranking problem is balanced.

\begin{notation} \label{Not31}
Consider an unweighted ranking problem $(N,R,M) \in \mathcal{R}_U^n$ such that $X_i, X_j \in N$ are two different objects and $g: O_i \leftrightarrow O_j$ is a one-to-one correspondence between the opponents of $X_i$ and $X_j$, consequently, $|O_i| = |O_j|$.
Then $\mathfrak{g} : \{k: X_k \in O_i \} \leftrightarrow \{\ell: X_\ell \in O_j \}$ is given by $X_{\mathfrak{g}(k)} = g(X_k)$.
\end{notation}

In order to make judgements like an object has stronger opponents, at least a partial order among opponent sets should be introduced. 

\begin{definition} \label{Def32}
\emph{Partial order of opponent sets}:
Let $(N,R,M) \in \mathcal{R}^n$ be a ranking problem and $f: \mathcal{R}^n \to \mathbb{R}^n$ be a scoring procedure.
Opponents of $X_i$ are at least as strong as opponents of $X_j$, denoted by $O_i \succeq O_j$, if there exists a one-to-one correspondence $g:O_i \leftrightarrow O_j$ such that $f_k(N,R,M) \geq f_{\mathfrak{g}(k)}(N,R,M)$ for all $X_k \in O_i$.
\end{definition}

For instance, $O_1 \sim O_4$ and $O_2 \sim O_3$ in Example~\ref{Examp31}, whereas $O_1$ and $O_2$ are not comparable.

Therefore, conditions~\ref{SC_con1}-\ref{SC_con6} never imply $X_i \succeq X_j$ if $O_i \prec O_j$ since an object with a weaker opponent set cannot be judged better.

Opponent sets have been defined only in the case of unweighted ranking problems, but self-consistency can be applied to objects which have the same number of comparisons, too. The extension is achieved by a decomposition of ranking problems.

\begin{definition} \label{Def33}
\emph{Sum of ranking problems}:
Let $(N,R,M),(N,R',M') \in \mathcal{R}^n$ be two ranking problems with the same object set $N$. The \emph{sum} of these ranking problems is the ranking problem $(N,R+R',M+M') \in \mathcal{R}^n$.
\end{definition}

Summing of ranking problems may have a natural interpretation. For example, they can contain the preferences of voters in two cities of the same country or the paired comparisons of players in the first and second half of the season.

Definition~\ref{Def33} means that any ranking problem can be decomposed into unweighted ranking problems, in other words, it can be obtained as a sum of unweighted ranking problems.
However, while the sum of ranking problems is unique, a ranking problem may have a number of possible decompositions.

\begin{notation} \label{Not32}
Let $(N,R^{(p)},M^{(p)}) \in \mathcal{R}_U^n$ be an unweighted ranking problem.
The opponent set of object $X_i$ is $O_i^{(p)}$.
Let $X_i, X_j \in N$ be two different objects and $g^{(p)}: O_i^{(p)} \leftrightarrow O_j^{(p)}$ be a one-to-one correspondence between the opponents of $X_i$ and $X_j$.
Then $\mathfrak{g}^{(p)}: \{k: X_k \in O_i^{(p)} \} \leftrightarrow \{\ell: X_\ell \in O_j^{(p)} \}$ is given by $X_{\mathfrak{g}^{(p)}(k)} = g^{(p)}(X_k)$.
\end{notation}

\begin{axiom} \label{Axiom32}
\emph{Self-consistency} ($SC$) \citep{ChebotarevShamis1997a}:
A scoring procedure $f: \mathcal{R}^n \to \mathbb{R}^n$ is called \emph{self-consistent} if the following implication holds for any ranking problem $(N,R,M) \in \mathcal{R}^n$ and for any objects $X_i,X_j \in N$:
if there exists a decomposition of the ranking problem $(N,R,M)$ into $m$ unweighted ranking problems -- that is, $R = \sum_{p=1}^m R^{(p)}$, $M = \sum_{p=1}^m M^{(p)}$, and $(N,R^{(p)},M^{(p)}) \in \mathcal{R}_U^n$ is an unweighted ranking problem for all $p = 1,2, \dots ,m$ -- in a way that enables a one-to-one mapping $g^{(p)}$ from $O^{(p)}_i$ onto $O^{(p)}_j$ such that $r_{ik}^{(p)} \geq r_{j \mathfrak{g}^{(p)}(k)}^{(p)}$ and $f_k(N,R,M) \geq f_{\mathfrak{g}^{(p)}(k)}(N,R,M)$ for all $p = 1,2, \dots ,m$ and $X_k \in O_i^{(p)}$, then
$f_i(N,R,M) \geq f_{j}(N,R,M)$, furthermore, $f_i(N,R,M) > f_{j}(N,R,M)$ if $r_{ik}^{(p)} > r_{j \mathfrak{g}^{(p)}(k)}^{(p)}$ or $f_k(N,R,M) > f_{\mathfrak{g}^{(p)}(k)}(N,R,M)$ for at least one $1 \leq p \leq m$ and $X_k \in O_i^{(p)}$.
\end{axiom}

Self-consistency formalizes conditions~\ref{SC_con1}-\ref{SC_con6}: if object $X_i$ is obviously not worse than object $X_j$, then it is not ranked lower, furthermore, if it is better, then it is ranked higher.
Self-consistency can also be interpreted as a property of a ranking.

The application of self-consistency is nontrivial because of the various opportunities for decomposition into unweighted ranking problems.
However, it may restrict the relative ranking of objects $X_i$ and $X_j$ only if $d_i = d_j$ since there should exist a one-to-one mapping between $O_i^{(p)}$ and $O_j^{(p)}$ for all $p = 1,2, \dots ,m$.
Thus $SC$ does not fully determine a ranking, even on the set of balanced ranking problems.

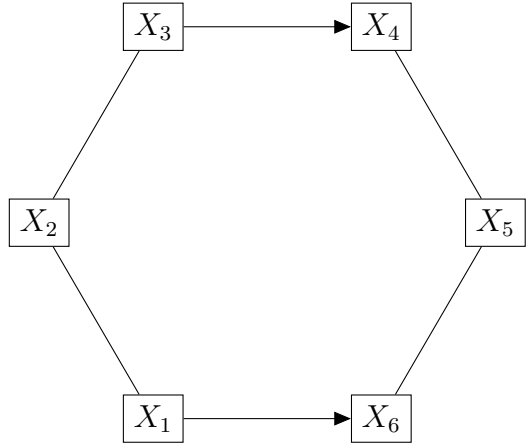
\begin{figure}[htbp]
\centering
\caption{The ranking problem of Example~\ref{Examp32}}
\label{Fig32}
\begin{tikzpicture}[scale=1, auto=center, transform shape, >=triangle 45]
\tikzstyle{every node}=[draw,shape=rectangle];
  \node (n1) at (240:3) {$X_1$};
  \node (n2) at (180:3) {$X_2$};
  \node (n3) at (120:3) {$X_3$};
  \node (n4) at (60:3)  {$X_4$};
  \node (n5) at (0:3)   {$X_5$};
  \node (n6) at (300:3) {$X_6$};

  \foreach \from/\to in {n1/n2,n2/n3,n4/n5,n5/n6}
    \draw (\from) -- (\to);
  \foreach \from/\to in {n1/n6,n3/n4}
    \draw [->] (\from) -- (\to);
\end{tikzpicture}
\end{figure}

\begin{example} \label{Examp32}
Let $(N,R,M) \in \mathcal{R}_B^6 \cap \mathcal{R}_U^6 \cap \mathcal{R}_E^6$ be the ranking problem in Figure~\ref{Fig32}: a directed edge from node $X_i$ to $X_j$ indicates a complete win of $X_i$ over $X_j$ in one comparison (as in Example~\ref{Examp31}) and an undirected edge from node $X_i$ to $X_j$ represents a draw in one comparison between the two objects.
\end{example}

\begin{proposition} \label{Prop31}
Self-consistency does not fully characterize a ranking method on the set of balanced, unweighted and extremal ranking problems $\mathcal{R}_B \cap \mathcal{R}_U \cap \mathcal{R}_E$.
\end{proposition}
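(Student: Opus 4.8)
The strategy is to work with the ranking problem of Example~\ref{Examp32}, which belongs to $\mathcal{R}_B^6 \cap \mathcal{R}_U^6 \cap \mathcal{R}_E^6$, and to exhibit two distinct rankings of its six objects that are both permitted by self-consistency; this suffices. Since the problem is unweighted, its maximal number of comparisons is $m = 1$, so the only decomposition into unweighted ranking problems available in Axiom~\ref{Axiom32} is the trivial one, and self-consistency reduces here to the following implication: for any two objects $X_i$ and $X_j$ -- all of which have $|O_i| = 2$ -- if there is a bijection $g \colon O_i \leftrightarrow O_j$ with $r_{ik} \geq r_{j \mathfrak{g}(k)}$ and $f_k \geq f_{\mathfrak{g}(k)}$ for both $X_k \in O_i$, then $f_i \geq f_j$, and $f_i > f_j$ whenever one of these two weak inequalities is strict.

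First I would read off from Figure~\ref{Fig32} the opponent sets together with the corresponding results: $X_1$ draws $X_2$ and beats $X_6$; $X_2$ draws both $X_1$ and $X_3$; $X_3$ draws $X_2$ and beats $X_4$; and $X_4, X_5, X_6$ are the images of $X_1, X_2, X_3$ under the relabelling $\sigma = (X_1\,X_4)(X_2\,X_5)(X_3\,X_6)$ combined with a reversal of all results, which is an automorphism of the ranking problem. As the two candidates I would take the rankings
\[
X_1 \sim X_2 \sim X_3 \succ X_4 \sim X_5 \sim X_6 \qquad \text{and} \qquad X_1 \sim X_3 \succ X_2 \succ X_5 \succ X_4 \sim X_6 ,
\]
which are, respectively, the ranking produced by the least squares method of Definition~\ref{Def23} and the ranking produced by generalized row sum (Definition~\ref{Def22}) for every value of the parameter $\varepsilon > 0$, and which differ, for instance, in the relative ranking of $X_1$ and $X_2$.

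It then remains to verify that each of these two rankings satisfies the reduced self-consistency condition above. For a fixed candidate this is a finite case analysis: one runs through every unordered pair $\{X_i, X_j\}$ and each of the at most two bijections between the two-element opponent sets $O_i$ and $O_j$, checks whether the result-domination and opponent-score-domination premises hold together, and, if so, confirms that the candidate ranking places $X_i$ weakly -- or strictly, if one premise is strict -- above $X_j$. Two observations lighten the bookkeeping: the $\sigma$-reversal automorphism lets one transfer each verified pair to its mirror image, and, since $X_4$ and $X_6$ each have a lost comparison, neither can result-dominate any object except the other, which eliminates many potential implications. The main obstacle is exactly this case analysis -- it is routine but has to be carried out afresh for each of the two rankings, because the opponent-score conditions are evaluated against the candidate ranking itself; in particular one must also check the strict part of self-consistency, namely that a pair of objects ranked equal admits no bijection, in either direction, that witnesses a strict result- or score-domination while still satisfying the opponent-score premise.
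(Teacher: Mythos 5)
Your proposal is correct and follows essentially the same route as the paper: it uses the ranking problem of Example~\ref{Examp32} and exhibits two self-consistent rankings of it, namely the paper's $\succeq^1$ and the reversal of the paper's $\succeq^2$ (which the paper also notes satisfies $SC$), verified by the same finite case analysis over pairs and bijections of two-element opponent sets. Since you have identified your two candidates as the outputs of the least squares and generalized row sum methods, you could in fact bypass the case analysis entirely by invoking Lemma~\ref{Lemma32}: both methods are self-consistent, so the rankings they induce on this problem are automatically allowed by $SC$, and they differ (e.g.\ on the pair $X_1, X_2$).
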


\begin{proof}
The statement can be verified by an example where at least two rankings are allowed by $SC$, we use Example~\ref{Examp32} for this purpose.
Consider the ranking $\succeq^1$ such that $(X_1 \sim^1 X_2 \sim^1 X_3) \succ^1 (X_4 \sim^1 X_5 \sim^1 X_6)$.
The opponent sets are $O_1 = \{ X_2, X_6 \}$, $O_2 = \{ X_1, X_3 \}$, $O_3 = \{ X_2, X_4 \}$, $O_4 = \{ X_3, X_5 \}$, $O_5 = \{ X_4, X_6 \}$ and $O_6 = \{ X_1, X_5 \}$, so $O_2 \succ (O_1 \sim O_3 \sim O_4 \sim O_6) \succ O_5$.
The results of $X_1$ and $X_3$ are $(0;1)$, the results of $X_2$ and $X_5$ are $(0;0)$, while the results of $X_4$ and $X_6$ are $(-1;0)$.
For objects with the same results, $SC$ implies $X_1 \sim X_3$, $X_4 \sim X_6$ and $X_2 \succ X_5$ (conditions~\ref{SC_con3} and \ref{SC_con6}), which hold in  $\succeq^1$.
For objects with different results, $SC$ leads to $X_2 \succ X_4$, $X_3 \succ X_4$, and $X_3 \succ X_5$ after taking the strength of opponents into account (condition~\ref{SC_con2}). These requirements are also met by the ranking $\succeq^1$.
Self-consistency imposes no other restrictions, therefore the ranking $\succeq^1$ satisfies it.

Now consider the ranking $\succeq^2$ such that $X_2 \prec^2 (X_1 \sim^2 X_3) \prec^2 (X_4 \sim^2 X_6) \prec^2 X_5$.
The opponent sets remain the same, but their partial order is given now as $O_2 \prec (O_4 \sim O_6)$, $O_2 \prec O_5$, $(O_1 \sim O_3) \prec (O_4 \sim O_6)$ and $(O_1 \sim O_3) \prec O_5$ (the opponents of $X_1$ and $X_2$, as well as $X_4$ and $X_5$, cannot be compared).
For objects with the same results, $SC$ implies $X_1 \sim X_3$, $X_4 \sim X_6$ and $X_2 \prec X_5$ (conditions~\ref{SC_con3} and \ref{SC_con6}), which hold in  $\succeq^2$.
For objects with different results, $SC$ leads to $X_1 \succ X_2$ after taking the strength of opponents into account (condition~\ref{SC_con2}). This condition is also met by the ranking $\succeq^2$.
Self-consistency imposes no other restrictions, therefore the ranking $\succeq^2$ also satisfies this axiom.

To conclude, rankings $\succeq^1$ and $\succeq^2$ are self-consistent. The ranking obtained by reversing $\succeq^2$ meets $SC$, too.
\end{proof}

\begin{lemma} \label{Lemma32}
The generalized row sum and least squares methods are self-consistent.
\end{lemma}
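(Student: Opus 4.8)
The plan is to verify both cases by the same direct computation: read off the defining linear system of the method at objects $X_i$ and $X_j$, substitute the decomposition supplied by the hypothesis of $SC$, and subtract. Fix $(N,R,M) \in \mathcal{R}^n$ and $X_i, X_j \in N$ together with a decomposition $R = \sum_{p=1}^m R^{(p)}$, $M = \sum_{p=1}^m M^{(p)}$ and bijections $g^{(p)}$ as in Axiom~\ref{Axiom32}; abbreviate $\mathbf{x} := \mathbf{x}(\varepsilon)(N,R,M)$ and $\mathbf{q} := \mathbf{q}(N,R,M)$. First I would record the structural consequence of the hypothesis: since each $g^{(p)}$ maps $O_i^{(p)}$ one-to-one onto $O_j^{(p)}$, we have $|O_i^{(p)}| = |O_j^{(p)}|$ for every $p$, hence $d_i = \sum_{p=1}^m |O_i^{(p)}| = \sum_{p=1}^m |O_j^{(p)}| = d_j$; write $d := d_i = d_j$. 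I would also note two elementary identities, valid for \emph{any} decomposition into unweighted problems and any vector $f \in \mathbb{R}^n$: first $s_i = \sum_{p=1}^m \sum_{X_k \in O_i^{(p)}} r_{ik}^{(p)}$, because $|r_{ik}^{(p)}| \leq m_{ik}^{(p)} = 0$ whenever $X_k \notin O_i^{(p)}$; second $\sum_{X_k \in N} m_{ik} f_k = \sum_{p=1}^m \sum_{X_k \in O_i^{(p)}} f_k$, because $m_{ik} = \sum_{p=1}^m m_{ik}^{(p)}$ and $m_{ik}^{(p)} = 1$ exactly when $X_k \in O_i^{(p)}$.

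For generalized row sum, the $i$-th equation of $(I + \varepsilon L)\mathbf{x} = (1 + \varepsilon mn)\mathbf{s}$ rearranges to $(1 + \varepsilon d)\, x_i = (1 + \varepsilon mn) s_i + \varepsilon \sum_{X_k \in N} m_{ik} x_k$. Substituting the two identities, writing the analogous equation for $X_j$, reindexing the sums for $X_j$ through the bijections $\mathfrak{g}^{(p)}$ (so that $\sum_{X_\ell \in O_j^{(p)}} x_\ell = \sum_{X_k \in O_i^{(p)}} x_{\mathfrak{g}^{(p)}(k)}$ and likewise for the $r$-terms), and subtracting yields
\[
(1 + \varepsilon d)\bigl( x_i - x_j \bigr) = (1 + \varepsilon mn) \sum_{p=1}^m \sum_{X_k \in O_i^{(p)}} \Bigl( r_{ik}^{(p)} - r_{j\mathfrak{g}^{(p)}(k)}^{(p)} \Bigr) + \varepsilon \sum_{p=1}^m \sum_{X_k \in O_i^{(p)}} \Bigl( x_k - x_{\mathfrak{g}^{(p)}(k)} \Bigr).
\]
By the hypotheses of $SC$ every summand on the right is nonnegative ($r_{ik}^{(p)} \geq r_{j\mathfrak{g}^{(p)}(k)}^{(p)}$ and $x_k = f_k \geq f_{\mathfrak{g}^{(p)}(k)} = x_{\mathfrak{g}^{(p)}(k)}$) and $1 + \varepsilon d > 0$, so $x_i \geq x_j$; moreover, if one of these inequalities is strict for some $p$ and some $X_k \in O_i^{(p)}$, then, since $1 + \varepsilon mn > 0$ and $\varepsilon > 0$, the right-hand side is strictly positive and $x_i > x_j$. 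This is exactly self-consistency of generalized row sum.

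For least squares the computation is identical with $I$ and the factor $(1 + \varepsilon mn)$ removed: the $i$-th equation of $L\mathbf{q} = \mathbf{s}$ gives $d\, q_i = s_i + \sum_{X_k \in N} m_{ik} q_k$, and the same subtraction produces $d\,(q_i - q_j) = \sum_{p=1}^m \sum_{X_k \in O_i^{(p)}} \bigl( r_{ik}^{(p)} - r_{j\mathfrak{g}^{(p)}(k)}^{(p)} \bigr) + \sum_{p=1}^m \sum_{X_k \in O_i^{(p)}} \bigl( q_k - q_{\mathfrak{g}^{(p)}(k)} \bigr)$, whence $q_i \geq q_j$ (with strict inequality under the strict hypothesis) as soon as $d > 0$. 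I expect the only genuine obstacle to be the degenerate case $d = 0$ together with the non-uniqueness of $\mathbf{q}$ for unconnected problems: there I would invoke the convention stated right after Definition~\ref{Def23} and apply least squares componentwise, so that $L\mathbf{q} = \mathbf{s}$ still holds on each component (the displayed identity is unaffected) and an isolated object gets rating $0$; then $d = 0$ forces $O_i^{(p)} = O_j^{(p)} = \emptyset$ for all $p$, no strict case arises, and $q_i = q_j = 0$ gives the required weak inequality.
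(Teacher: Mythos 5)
Your proof is correct, but it takes a genuinely different route from the paper: the paper's entire proof of Lemma~\ref{Lemma32} is a citation to Theorem~5 of \citet{ChebotarevShamis1998a}, which characterizes the class of self-consistent scoring procedures and from which membership of generalized row sum and least squares follows. You instead give a self-contained verification by reading off the $i$-th and $j$-th defining equations, rewriting $s_i$ and $\sum_k m_{ik}x_k$ as double sums over the unweighted layers of the decomposition, reindexing through the bijections $\mathfrak{g}^{(p)}$, and subtracting; the resulting identity $(1+\varepsilon d)(x_i-x_j) = (1+\varepsilon mn)\sum_p\sum_k (r^{(p)}_{ik}-r^{(p)}_{j\mathfrak{g}^{(p)}(k)}) + \varepsilon\sum_p\sum_k (x_k - x_{\mathfrak{g}^{(p)}(k)})$ makes both the weak and the strict conclusions immediate, and your handling of the degenerate $d=0$ / unconnected case for least squares via the convention following Definition~\ref{Def23} is the right way to close the one genuine loose end. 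The citation buys brevity and situates the lemma inside a full characterization of self-consistency; your computation buys self-containment, makes explicit why $\varepsilon>0$ and the normalization convention matter, and is stylistically parallel to the paper's own direct proof of Lemma~\ref{Lemma42} for the row sum method, so it would be a legitimate (if longer) substitute for the paper's proof.
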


\begin{proof}
See \citet[Theorem~5]{ChebotarevShamis1998a}.
\end{proof}

\citet[Theorem~5]{ChebotarevShamis1998a} provide a characterization of self-consistent scoring procedures, while \citet[Table~2]{ChebotarevShamis1998a} gives some further examples.

\subsection{The connection of independence of irrelevant matches and self-consistency} \label{Sec33}

So far we have discussed two axioms, $IIM$ and $SC$. It turns out that they cannot be satisfied at the same time.

\begin{figure}[htbp]
\centering
\caption{The ranking problems of Example~\ref{Examp33}}
\label{Fig33}
  
\begin{subfigure}{.5\textwidth}
  \centering
  \subcaption{Ranking problem $(N,R,M)$}
  \label{Fig33a}
\begin{tikzpicture}[scale=1, auto=center, transform shape, >=triangle 45]
\tikzstyle{every node}=[draw,shape=rectangle]; 
  \node (n1) at (135:2) {$X_1$};
  \node (n2) at (45:2)  {$X_2$};
  \node (n3) at (315:2) {$X_3$};
  \node (n4) at (225:2) {$X_4$};

  \foreach \from/\to in {n1/n2,n1/n4,n2/n3}
    \draw (\from) -- (\to);
  \draw [->] (n4) -- (n3);
\end{tikzpicture}
\end{subfigure}
\begin{subfigure}{.5\textwidth}
  \centering
  \subcaption{Ranking problem $(N,R',M)$}
  \label{Fig33b}
\begin{tikzpicture}[scale=1, auto=center, transform shape, >=triangle 45]
\tikzstyle{every node}=[draw,shape=rectangle];
  \node (n1) at (135:2) {$X_1$};
  \node (n2) at (45:2)  {$X_2$};
  \node (n3) at (315:2) {$X_3$};
  \node (n4) at (225:2) {$X_4$};

  \foreach \from/\to in {n1/n2,n1/n4,n2/n3}
    \draw (\from) -- (\to);
  \draw [->] (n3) -- (n4);
\end{tikzpicture}
\end{subfigure}
\end{figure}
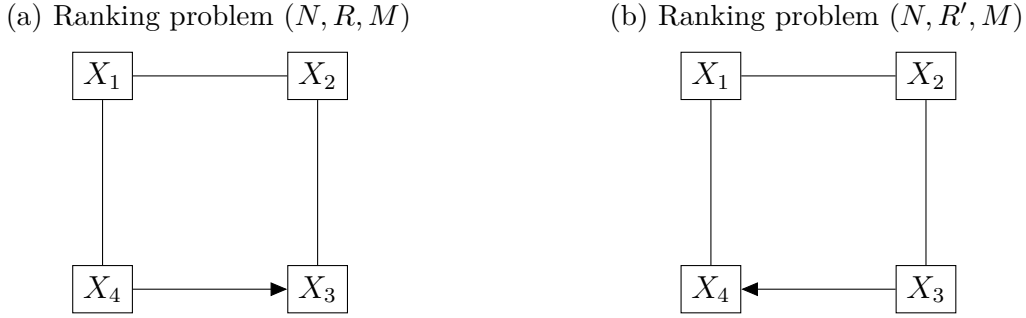

\begin{example} \label{Examp33}
Let $(N,R,M), (N,R',M) \in \mathcal{R}_B^4 \cap \mathcal{R}_U^4 \cap \mathcal{R}_E^4$ be the ranking problems in Figure~\ref{Fig33} with the results and matches matrices
\[
R = \left[
\begin{array}{cccc}
    0     & 0     & 0     & 0 \\
    0     & 0     & 0     & 0 \\
    0     & 0     & 0     & 1 \\
    0     & 0     & -1    & 0 \\
\end{array}
\right], \,
R' = \left[
\begin{array}{cccc}
    0     & 0     & 0     & 0 \\
    0     & 0     & 0     & 0 \\
    0     & 0     & 0     & -1 \\
    0     & 0     & 1     & 0 \\
\end{array}
\right], \text{ and }
M = \left[
\begin{array}{cccc}
    0     & 1     & 0     & 1 \\
    1     & 0     & 1     & 0 \\
    0     & 1     & 0     & 1 \\
    1     & 0     & 1     & 0 \\
\end{array}
\right].
\]
\end{example}

\begin{theorem} \label{Theo31}
There exists no scoring procedure that is independent of irrelevant matches and self-consistent.
\end{theorem}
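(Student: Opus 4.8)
The plan is to derive the contradiction from the two ranking problems $(N,R,M)$ and $(N,R',M)$ of Example~\ref{Examp33}: they are designed so that they differ only in the single comparison between $X_3$ and $X_4$, yet self-consistency forces $X_2 \succ X_1$ in the first problem and $X_1 \succ X_2$ in the second. Since $X_1$ and $X_2$ are not involved in the comparison that changes, independence of irrelevant matches forbids exactly such a reversal, which closes the argument. Note that both problems are unweighted, so the only admissible decomposition in applying $SC$ is the trivial one, and $SC$ merely compares objects through one-to-one maps between the opponent sets $O_1=O_3=\{X_2,X_4\}$ and $O_2=O_4=\{X_1,X_3\}$.

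The heart of the proof is the claim that $SC$ implies $f_{X_2}(N,R,M)>f_{X_1}(N,R,M)$. Applying $SC$ to the pair $(X_3,X_1)$ with the identity map between the (equal) opponent sets gives $f_{X_3}(N,R,M)>f_{X_1}(N,R,M)$, because $r_{32}=r_{12}$ while $r_{34}>r_{14}$; symmetrically, applying $SC$ to $(X_2,X_4)$ gives $f_{X_2}(N,R,M)>f_{X_4}(N,R,M)$, because $r_{21}=r_{41}$ while $r_{23}>r_{43}$. Both inequalities hold for every self-consistent $f$, with no extra hypothesis, since the strength conditions are trivially satisfied by the identity map. Now suppose, for contradiction, that $f_{X_1}(N,R,M)\ge f_{X_2}(N,R,M)$. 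Chaining the two inequalities just obtained yields $f_{X_3}>f_{X_1}\ge f_{X_2}>f_{X_4}$, hence in particular $f_{X_1}>f_{X_4}$ and $f_{X_3}>f_{X_2}$. These are precisely the two strength conditions needed to apply $SC$ to the pair $(X_2,X_1)$ via the one-to-one map $X_1\mapsto X_4$, $X_3\mapsto X_2$ from $O_2$ onto $O_1$ (the results conditions $r_{21}\ge r_{14}$ and $r_{23}\ge r_{12}$ hold because all four of these entries are $0$), and, since $f_{X_1}>f_{X_4}$ is strict, $SC$ returns $f_{X_2}(N,R,M)>f_{X_1}(N,R,M)$ -- contradicting the assumption. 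Hence $X_2\succ X_1$ in $(N,R,M)$. The problem $(N,R',M)$ arises from $(N,R,M)$ by reversing the single decisive edge, so the identical reasoning with the roles of $\{X_1,X_3\}$ and $\{X_2,X_4\}$ interchanged gives $f_{X_1}(N,R',M)>f_{X_2}(N,R',M)$, that is, $X_1\succ X_2$.

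To finish, observe that $(N,R,M)$ and $(N,R',M)$ have the same matches matrix and agree in every entry of the tournament matrix except $t_{34}$ and $t_{43}$. Realising the passage from one to the other as two single-entry changes -- first lower $t_{34}$ to $0$ (so that $X_3$ and $X_4$ are temporarily not compared), then raise $t_{43}$ to $1$ -- each of which is a legitimate application of $IIM$ not touching $X_1$ or $X_2$ (cf.\ the remark after Axiom~\ref{Axiom31}), independence of irrelevant matches yields $f_{X_1}(N,R',M)\ge f_{X_2}(N,R',M)\Rightarrow f_{X_1}(N,R,M)\ge f_{X_2}(N,R,M)$. The premise is true by the previous paragraph, the conclusion is false, and this is the contradiction. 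For $n>4$ one first pads $N$ with $n-4$ further objects compared with no one (or with any fixed block involving none of $X_1,X_2$); this alters neither the $SC$ deductions above (they only ever use maps among the opponents of $X_1,X_2,X_3,X_4$) nor the applicability of $IIM$.

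I expect the delicate point to be the self-consistency bookkeeping in the second paragraph: one must pick the right bijection between the two \emph{distinct} opponent sets $O_2$ and $O_1$, check the results and strength comparisons in exactly the asymmetric form required by Axiom~\ref{Axiom32}, and -- the key observation -- recognise that the two ``free'' comparisons $X_3\succ X_1$ and $X_2\succ X_4$ are, together with the negated target inequality, precisely what is needed to turn $SC$ against itself. A secondary point to get right is that the move from $(N,R,M)$ to $(N,R',M)$, which alters two entries of the tournament matrix, must be decomposed into single-entry $IIM$ steps.
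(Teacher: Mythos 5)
Your proof is correct and follows essentially the same route as the paper: the same Example~\ref{Examp33}, the same three self-consistency deductions (yours mirrored under the permutation $X_1\leftrightarrow X_2$, $X_3\leftrightarrow X_4$, since you read the decisive comparison from the displayed results matrices rather than from Figure~\ref{Fig33}), and the same final clash with $IIM$ on the order of $X_1$ and $X_2$. Your only additions -- splitting the passage from $T$ to $T'$ into two single-entry $IIM$ steps and padding the object set for $n>4$ -- are points the paper leaves implicit.
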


\begin{proof}
The contradiction of the two properties is proved by Example~\ref{Examp33}.
The opponent sets are $O_1 = O_3 = \{ X_2, X_4 \}$ and $O_2 = O_4 = \{ X_1, X_3 \}$ in both ranking problems.
Assume to the contrary that there exists a scoring procedure $f: \mathcal{R}^n \to \mathbb{R}^n$, which is independent of irrelevant matches and self-consistent.
$IIM$ means that $f_1(N,R,M) \geq f_2(N,R,M) \iff f_1(N,R',M) \geq f_2(N,R',M)$.

\begin{enumerate}[label=\emph{\alph*})]
\item \label{Enum_a}
Consider the (identity) one-to-one mapping $g_{13}: O_1 \leftrightarrow O_3$, where $g_{13}(X_2) = X_2$ and $g_{13}(X_4) = X_4$. Since $r_{12} = r_{42} = 0$ and $0 = r_{14} > r_{34} = -1$, $g_{13}$ satisfies condition~\ref{SC_con1} of $SC$, hence $f_1(N,R,M) > f_3(N,R,M)$.

\item \label{Enum_b}
Consider the (identity) one-to-one mapping $g_{42}: O_4 \leftrightarrow O_2$, where $g_{42}(X_1) = X_1$ and $g_{42}(X_3) = X_3$. Since $r_{41} = r_{21} = 0$ and $1 = r_{43} > r_{23} = 0$, $g_{42}$ satisfies condition~\ref{SC_con1} of $SC$, hence $f_4(N,R,M) > f_2(N,R,M)$.

\item \label{Enum_c}
Suppose that $f_2(N,R,M) \geq f_1(N,R,M)$, implying $f_4(N,R,M) > f_3(N,R,M)$.
Consider the one-to-one correspondence $g_{12}: O_1 \leftrightarrow O_2$, where $g_{12}(X_2) = X_1$ and $g_{12}(X_4) = X_3$. Since $r_{12} = r_{21} = 0$ and $r_{14} = r_{23} = 0$, $g_{12}$ satisfies condition~\ref{SC_con3} of $SC$, hence $f_1(N,R,M) > f_2(N,R,M)$. It is a contradiction.
\end{enumerate}

Thus only $f_1(N,R,M) > f_2(N,R,M)$ is allowed.

Note that ranking problem $(N,R',M)$ can be obtained from $(N,R,M)$ by the permutation $\sigma: N \to N$ such that $\sigma(X_1) = X_2$, $\sigma(X_2) = X_1$, $\sigma(X_3) = X_4$ and $\sigma(X_4) = X_3$. The above argument results in $f_2(N,R',M) > f_1(N,R',M)$, contrary to independence of irrelevant matches.

To conclude, no scoring procedure can meet $IIM$ and $SC$ simultaneously.
\end{proof}

\begin{corollary} \label{Col31}
The row sum method violates self-consistency.
\end{corollary}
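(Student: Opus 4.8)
The plan is to obtain the corollary as an immediate consequence of the two results just established. By Lemma~\ref{Lemma31} the row sum method is independent of irrelevant matches, and by Theorem~\ref{Theo31} no scoring procedure can be simultaneously independent of irrelevant matches and self-consistent. Hence the row sum method cannot be self-consistent, which is exactly the claim. This one-line deduction is the route I would present first.

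If instead a self-contained argument is wanted, one can reuse Example~\ref{Examp33} directly. In the ranking problem $(N,R,M)$ of that example the row sum scores are $s_1 = s_2 = 0$, $s_3 = 1$ and $s_4 = -1$, so the row sum ranking puts $X_1 \sim X_2$. However, steps~\ref{Enum_a}--\ref{Enum_c} in the proof of Theorem~\ref{Theo31} show that any self-consistent scoring procedure $f$ must satisfy $f_1(N,R,M) > f_2(N,R,M)$. Since the row sum scores violate this strict inequality, the row sum method fails self-consistency.

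I do not expect any genuine obstacle: all the needed work is already contained in Lemma~\ref{Lemma31}, Theorem~\ref{Theo31} and Example~\ref{Examp33}. The only real choice is stylistic — whether to phrase the corollary purely as a consequence of the impossibility theorem, or to exhibit the failure explicitly on the small instance. I would combine the two: state the one-line argument, then append the short computation of the row sums in Example~\ref{Examp33} to make concrete which self-consistency requirement (the strict part forced through condition~\ref{SC_con3} in step~\ref{Enum_c}) the row sum method breaks.
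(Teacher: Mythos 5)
Your primary argument is exactly the paper's proof: Corollary~\ref{Col31} is deduced immediately from Lemma~\ref{Lemma31} and Theorem~\ref{Theo31}. The supplementary computation on Example~\ref{Examp33} (row sums $s_1=s_2=0$ versus the forced strict inequality $f_1>f_2$) is correct and makes the failure concrete, but it is an optional addition rather than a different route.
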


\begin{proof}
It is an immediate consequence of Lemma~\ref{Lemma31} and Theorem~\ref{Theo31}.
\end{proof}

\begin{corollary} \label{Col32}
The generalized row sum and least squares methods violate independence of irrelevant matches.
\end{corollary}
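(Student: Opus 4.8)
The plan is to derive the statement immediately from the two preceding results, in exactly the way Corollary~\ref{Col31} is obtained. By Lemma~\ref{Lemma32}, the generalized row sum method $\mathbf{x}(\varepsilon)$ (for every fixed $\varepsilon > 0$) and the least squares method $\mathbf{q}$ are both self-consistent. By Theorem~\ref{Theo31}, there is no scoring procedure that is simultaneously self-consistent and independent of irrelevant matches. Combining the two, neither generalized row sum nor least squares can satisfy $IIM$, which is precisely the claim.

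If a self-contained verification is preferred instead of the two-line argument, I would work directly with Example~\ref{Examp33}. The comparison multigraph of both $(N,R,M)$ and $(N,R',M)$ is the $4$-cycle on $X_1,X_2,X_3,X_4$, which is connected, so $\mathbf{q}$ (and of course $\mathbf{x}(\varepsilon)$) is well defined on both. One computes the ratings from the defining linear systems of Definitions~\ref{Def22} and~\ref{Def23} and checks that the top two objects satisfy $X_1 \succ X_2$ in $(N,R,M)$ while $X_2 \succ X_1$ in $(N,R',M)$; since $(N,R',M)$ differs from $(N,R,M)$ only in the entry $t_{34}$, which involves neither $X_1$ nor $X_2$, this contradicts Axiom~\ref{Axiom31}. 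The reversal of the order between $X_1$ and $X_2$ can be seen with essentially no computation by invoking the permutation $\sigma$ from the proof of Theorem~\ref{Theo31}, which swaps $(X_1,X_3)$ with $(X_2,X_4)$ and carries one problem onto the other, together with the (trivially checked) invariance of both methods under relabelling of the objects.

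There is essentially no obstacle here: the corollary is a one-line consequence of Lemma~\ref{Lemma32} and Theorem~\ref{Theo31}. The only mild point worth flagging is that the least squares ranking is guaranteed to be unique only on connected ranking problems; but Example~\ref{Examp33} is connected, so this causes no difficulty, and the alternative direct verification is routine linear algebra on a $4 \times 4$ Laplacian that merely reconfirms the abstract argument.
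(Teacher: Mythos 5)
Your first paragraph is exactly the paper's proof: Corollary~\ref{Col32} is an immediate consequence of Lemma~\ref{Lemma32} and Theorem~\ref{Theo31}, and your two-line argument is correct. The supplementary direct verification on Example~\ref{Examp33} is fine but unnecessary.
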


\begin{proof}
It follows from Lemma~\ref{Lemma32} and Theorem~\ref{Theo31}.
\end{proof}

A set of axioms is said to be \emph{logically independent} if none of them are implied by the others.

\begin{corollary} \label{Col33}
$IIM$ and $SC$ are logically independent axioms.
\end{corollary}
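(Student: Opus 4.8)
The plan is to unpack what logical independence means when there are exactly two axioms on the table: it suffices to exhibit, for each of the two axioms, a scoring procedure that satisfies that axiom while violating the other. Concretely, I would show (i) $IIM$ does not imply $SC$, by producing a scoring procedure that is independent of irrelevant matches but not self-consistent, and (ii) $SC$ does not imply $IIM$, by producing a scoring procedure that is self-consistent but not independent of irrelevant matches. Since there are only two axioms, these two claims together are exactly the assertion that the set $\{IIM, SC\}$ is logically independent.

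For claim (i), the natural witness is the row sum method $\mathbf{s}(N,R,M) = R\mathbf{e}$. By Lemma~\ref{Lemma31} it is independent of irrelevant matches, and by Corollary~\ref{Col31} it violates self-consistency; hence $SC$ is not a consequence of $IIM$. For claim (ii), the natural witness is the generalized row sum method $\mathbf{x}(\varepsilon)$ (or, equally well, the least squares method $\mathbf{q}$). By Lemma~\ref{Lemma32} it is self-consistent, and by Corollary~\ref{Col32} it violates independence of irrelevant matches; hence $IIM$ is not a consequence of $SC$. Combining the two gives the statement.

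I expect there to be no real obstacle here: the corollary is an immediate bookkeeping consequence of Lemmas~\ref{Lemma31}--\ref{Lemma32} and Corollaries~\ref{Col31}--\ref{Col32}, which in turn rest on Theorem~\ref{Theo31}. The only point that needs a word of care is the reduction step — spelling out that, for a two-element axiom set, ``no axiom is implied by the others'' reduces precisely to the two non-implications above — so that the cited facts about row sum and generalized row sum plainly close the argument.
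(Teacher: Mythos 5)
Your proposal is correct and follows exactly the paper's argument: the paper likewise derives the corollary from Corollaries~\ref{Col31} and \ref{Col32} (together with Lemmata~\ref{Lemma31} and \ref{Lemma32}), using row sum as the witness that $IIM$ does not imply $SC$ and generalized row sum/least squares as the witness that $SC$ does not imply $IIM$. You merely spell out the bookkeeping more explicitly than the paper's one-line proof.
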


\begin{proof}
It is a consequence of Corollaries~\ref{Col31} and \ref{Col32}.
\end{proof}

\section{How to achieve possibility?} \label{Sec4}

Impossibility results, like the one in Theorem~\ref{Theo31}, can be avoided in at least two ways: by introducing some restrictions on the class of ranking problems considered, or by weakening of one or more axioms.

\subsection{Domain restrictions} \label{Sec41}

Besides the natural subclasses of ranking problems introduced in Section~\ref{Sec21}, the number of objects can be limited, too.

\begin{proposition} \label{Prop41}
The generalized row sum and least squares methods are independent of irrelevant matches and self-consistent on the set of ranking problems with at most three objects $\mathcal{R}^n | n \leq 3$.
\end{proposition}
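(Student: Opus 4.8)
The plan is to show that for $n \leq 3$ the property $IIM$ becomes vacuous, so that the only axiom with content is $SC$, and then to verify that generalized row sum and least squares satisfy $SC$ by invoking a result already available in the excerpt. First I would dispense with $IIM$: the axiom, as formalized in Axiom~\ref{Axiom31}, requires \emph{four} distinct objects $X_i,X_j,X_k,X_\ell \in N$. When $n \leq 3$ no such quadruple exists, hence the hypothesis of the implication defining $IIM$ is never met, and every scoring procedure on $\mathcal{R}^n$ with $n \leq 3$ is trivially independent of irrelevant matches. (The excerpt already flags this, noting ``Property $IIM$ has a meaning if $n \geq 4$.'') This disposes of the first half of the statement for \emph{any} scoring procedure, in particular for both methods in question.

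It remains to establish self-consistency of generalized row sum and least squares on $\mathcal{R}^n$ for $n \leq 3$. Here I would simply appeal to Lemma~\ref{Lemma32}: the generalized row sum and least squares methods are self-consistent on the full domain $\mathcal{R}^n$, with the proof referred to \citet[Theorem~5]{ChebotarevShamis1998a}. Since $SC$ is a universally quantified implication over ranking problems and pairs of objects, restricting attention to the subdomain $\{(N,R,M) \in \mathcal{R}^n : n \leq 3\}$ cannot destroy the property — if the implication holds for all ranking problems it holds for those with at most three objects. Thus both methods are self-consistent on $\mathcal{R}^n | n \leq 3$. Combining this with the vacuous satisfaction of $IIM$ yields the claim.

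The only point requiring a sentence of care is whether ``restricting the domain'' could interfere with either axiom in a subtle way. For $IIM$ the restriction only helps (fewer hypotheses to satisfy). For $SC$ one should note that the decompositions into unweighted ranking problems referenced in Axiom~\ref{Axiom32} stay within $\mathcal{R}^n$ for the same $n$, so the verification of $SC$ for a fixed three-object problem never needs to leave the restricted domain; the inherited property is genuine, not an artefact. There is essentially no obstacle here — the content of the proposition is that the impossibility of Theorem~\ref{Theo31} is driven entirely by the interaction of the two axioms on problems with at least four objects, and for $n \leq 3$ that interaction is absent because one of the two axioms has no bite.
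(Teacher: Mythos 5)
Your argument is correct and is essentially the paper's own proof: the paper likewise notes that $IIM$ has no meaning on $\mathcal{R}^n$ with $n \leq 3$ (so it is vacuously satisfied) and then invokes Lemma~\ref{Lemma32} for self-consistency. Your additional remark that the decompositions in Axiom~\ref{Axiom32} stay within the restricted domain is a fair point of care, but adds nothing beyond the paper's one-line argument.
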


\begin{proof}
$IIM$ has no meaning on the set $\mathcal{R}^n | n \leq 3$, so any self-consistent scoring procedure is appropriate, thus Lemma~\ref{Lemma32} provides the result.
\end{proof}

Proposition~\ref{Prop41} has some significance since ranking is not trivial if $n=3$.
However, if at least four objects are allowed, the situation is much more severe.

\begin{proposition} \label{Prop42}
There exists no scoring procedure that is independent of irrelevant matches and self-consistent on the set of balanced, unweighted and extremal ranking problems with four objects $\mathcal{R}_B^4 \cap \mathcal{R}_U^4 \cap \mathcal{R}_E^4$.
\end{proposition}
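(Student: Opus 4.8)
The plan is to show that the proof of Theorem~\ref{Theo31} can be carried out without ever leaving $\mathcal{R}_B^4 \cap \mathcal{R}_U^4 \cap \mathcal{R}_E^4$, so that it already establishes Proposition~\ref{Prop42}. First I would check that the two ranking problems $(N,R,M)$ and $(N,R',M)$ of Example~\ref{Examp33} both belong to this class: they share the matches matrix $M$, every object has $d_i = 2$ comparisons (balanced), all $m_{ij} \in \{0,1\}$ (unweighted), and all $|r_{ij}| \in \{0, m_{ij}\}$ (extremal). I would also record that $(N,R',M)$ is obtained from $(N,R,M)$ by reversing the single comparison between $X_3$ and $X_4$, so that passing from one to the other does not touch $M$.

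Next I would replay the three deductions \ref{Enum_a}--\ref{Enum_c} of the proof of Theorem~\ref{Theo31} and verify that each is permissible on the restricted domain. Steps \ref{Enum_a} and \ref{Enum_b} rest on condition~\ref{SC_con1} and step~\ref{Enum_c} on condition~\ref{SC_con3}; each uses only the trivial ($m=1$) decomposition of $(N,R,M)$, which is legitimate because $(N,R,M) \in \mathcal{R}_U^4$, together with the identity or a fixed one-to-one correspondence between opponent sets and comparisons of the values $f_k(N,R,M)$ evaluated at the single problem $(N,R,M) \in \mathcal{R}_B^4 \cap \mathcal{R}_U^4 \cap \mathcal{R}_E^4$. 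Hence $SC$ restricted to this class already forces $f_1(N,R,M) > f_2(N,R,M)$. Applying the relabelling $\sigma$ with $\sigma(X_1) = X_2$, $\sigma(X_2) = X_1$, $\sigma(X_3) = X_4$, $\sigma(X_4) = X_3$ -- which sends $(N,R,M)$ to $(N,R',M)$ and maps $\mathcal{R}_B^4 \cap \mathcal{R}_U^4 \cap \mathcal{R}_E^4$ onto itself -- the same argument, now read at $(N,R',M)$, gives $f_2(N,R',M) > f_1(N,R',M)$. Since the two problems differ only in the comparison between $X_3$ and $X_4$, $IIM$ forces $f_1(N,R,M) \geq f_2(N,R,M) \iff f_1(N,R',M) \geq f_2(N,R',M)$, contradicting the two strict inequalities just derived; hence no scoring procedure on $\mathcal{R}_B^4 \cap \mathcal{R}_U^4 \cap \mathcal{R}_E^4$ is both independent of irrelevant matches and self-consistent.

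The step I expect to be the main obstacle is the appeal to $IIM$ inside $\mathcal{R}_B^4 \cap \mathcal{R}_U^4 \cap \mathcal{R}_E^4$. Altering one entry $t_{34}$ of a tournament matrix in isolation changes $m_{34}$ and hence the degrees $d_3$ and $d_4$, which destroys balance, so the move from $(N,R,M)$ to $(N,R',M)$ cannot be factored into single-entry changes that all remain in the restricted class. One therefore has to be explicit that the instance of $IIM$ used here -- reversing the outcome of one already-existing match between two objects outside $\{X_1,X_2\}$, which leaves $M$ (and therefore balancedness, unweightedness and extremality) intact -- is the appropriate elementary form of the axiom in the unweighted extremal setting; once this is granted the proposition is nothing more than the restriction of Theorem~\ref{Theo31}. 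If instead a strict one-tournament-entry reading of $IIM$ were demanded, the degree count just given shows that no chain of such moves connects any two distinct problems of $\mathcal{R}_B^4 \cap \mathcal{R}_U^4 \cap \mathcal{R}_E^4$ at all, which is itself the reason the match-reversal reading is the intended one.
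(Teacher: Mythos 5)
Your proposal is correct and is essentially the paper's own proof, which consists precisely of the observation that both ranking problems of Example~\ref{Examp33} lie in $\mathcal{R}_B^4 \cap \mathcal{R}_U^4 \cap \mathcal{R}_E^4$, so the argument of Theorem~\ref{Theo31} applies verbatim on the restricted domain. The obstacle you flag does not arise, since your reading of $IIM$ as permitting the reversal of an existing match between $X_3$ and $X_4$ (leaving $M$, and hence balancedness, unweightedness and extremality, unchanged) is exactly how the paper applies the axiom in the proof of Theorem~\ref{Theo31}.
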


\begin{proof}
The ranking problems of Example~\ref{Examp33}, used for verifying the impossibility in Theorem~\ref{Theo31}, are from the set $\mathcal{R}_B^4 \cap \mathcal{R}_U^4 \cap \mathcal{R}_E^4$.
\end{proof}

Proposition~\ref{Prop42} does not deal with the class of round-robin ranking problems. Then another possibility result emerges.

\begin{proposition} \label{Prop43}
The row sum, generalized row sum and least squares methods are independent of irrelevant matches and self-consistent on the set of round-robin ranking problems $\mathcal{R}_R$.
\end{proposition}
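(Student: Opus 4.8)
The plan is to split the statement into its two halves and treat the three methods accordingly. For independence of irrelevant matches I would observe that the axiom is vacuous on $\mathcal{R}_R$, much as it is vacuous for $n\le 3$ in Proposition~\ref{Prop41}. Indeed, $IIM$ compares two ranking problems $(N,T),(N,T')\in\mathcal{R}^n$ that agree except in a single entry $t'_{k\ell}\ne t_{k\ell}$. If both belonged to $\mathcal{R}_R$, then every $m_{ab}$ with $\{a,b\}\ne\{k,\ell\}$ would be unchanged, and since $IIM$ has content only for $n\ge 4$ there is such a pair, so the common multiplicity would be the same value $m$ in both problems; but $t'_{\ell k}=t_{\ell k}$, hence $m'_{k\ell}=t'_{k\ell}+t_{\ell k}\ne t_{k\ell}+t_{\ell k}=m$, a contradiction. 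So no admissible pair lies inside $\mathcal{R}_R$, and $IIM$ holds trivially there for every scoring procedure; for the row sum one may instead simply cite Lemma~\ref{Lemma31}.

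For self-consistency, the generalized row sum and least squares methods are covered at once by Lemma~\ref{Lemma32}, since they are self-consistent on all of $\mathcal{R}^n$. The substantive claim is that the row sum is self-consistent on $\mathcal{R}_R$ -- something that fails on the full domain by Corollary~\ref{Col31}. The structural fact I would use is that a round-robin problem with common multiplicity $m$ admits essentially one type of decomposition into $m$ unweighted problems: each component $M^{(p)}$ must satisfy $m^{(p)}_{ab}=1$ for all $a\ne b$, so every component is again round-robin and $O^{(p)}_i=N\setminus\{X_i\}$ for all $p$ and all $X_i$. Hence each map $g^{(p)}$ in the hypothesis of self-consistency is a bijection from $\{1,\dots,n\}\setminus\{i\}$ onto $\{1,\dots,n\}\setminus\{j\}$.

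I would then run two summations. Adding the inequalities $r^{(p)}_{ik}\ge r^{(p)}_{j\mathfrak{g}^{(p)}(k)}$ over all $p$ and all $k\ne i$, and using that $\mathfrak{g}^{(p)}$ is onto $\{1,\dots,n\}\setminus\{j\}$, yields $s_i\ge s_j$ (strictly if any term is strict), because $R=\sum_p R^{(p)}$ means row sums add. Adding instead the strength inequalities $s_k\ge s_{\mathfrak{g}^{(p)}(k)}$ over $k\ne i$ for a fixed $p$ gives $\sum_{k\ne i}s_k\ge\sum_{\ell\ne j}s_\ell$; both sides differ from $\sum_k s_k$ by exactly one term, so this says $s_j\ge s_i$ (again strictly if any term is strict). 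Combining the two, every admissible hypothesis forces $s_i=s_j$ and precludes any strict inequality appearing in it; thus the conclusion of self-consistency -- $f_i\ge f_j$, and $f_i>f_j$ in the strict case -- is satisfied, the strict case being in fact empty. Therefore the row sum is self-consistent on $\mathcal{R}_R$, which completes the proof.

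The only point that needs care is the last paragraph's bookkeeping: the strictness clause of self-consistency can be triggered by a strict result inequality \emph{or} by a strict strength inequality, and one must verify that each possibility is incompatible with the two summations above, so that no admissible hypothesis on $\mathcal{R}_R$ ever carries a strict part. Everything else reduces to earlier results or to the vacuity of $IIM$ on $\mathcal{R}_R$.
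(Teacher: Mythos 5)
Your treatment of $IIM$ contains the one genuine gap. The vacuity argument reads the premise of Axiom~\ref{Axiom31} as altering the single entry $t_{k\ell}$ while freezing $t_{\ell k}$, so that $m_{k\ell}$ necessarily changes and no admissible pair of problems can both lie in $\mathcal{R}_R$. That is not the reading the paper works with: in the proof of Theorem~\ref{Theo31} the axiom is applied directly to the pair $(N,R,M)$, $(N,R',M)$ of Example~\ref{Examp33}, where $M$ is unchanged and hence \emph{both} $t_{34}$ and $t_{43}$ differ; and the proof of Corollary~\ref{Col44} explicitly takes two problems in $\mathcal{R}_R^n$ differing in the comparison between $X_k$ and $X_\ell$ and treats the situation as non-vacuous. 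Under the intended reading -- the whole comparison $(t_{k\ell},t_{\ell k})$ may change, so $m_{k\ell}$ can remain equal to $m$ -- $IIM$ on $\mathcal{R}_R$ is exactly Rubinstein's independence axiom and has real content. Your argument then establishes $IIM$ only for row sum (Lemma~\ref{Lemma31}); for generalized row sum and least squares, which violate $IIM$ on the full domain (Corollary~\ref{Col32}), something substantive is still needed. Two fixes are available: the paper's, namely that by \emph{agreement} and \emph{score consistency} these two methods induce the same ranking as row sum on $\mathcal{R}_R$, so that Lemma~\ref{Lemma31} transfers (both $IIM$ and $SC$ constrain only the induced ranking); or the macrovertex route, observing that $\{X_k,X_\ell\}$ is a macrovertex in any round-robin problem, so macrovertex independence (Lemma~\ref{Lemma41}) gives the required invariance of the order of $X_i$ and $X_j$. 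A proof that renders half of the proposition trivially true for \emph{every} scoring procedure should itself have been a warning sign.

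Your self-consistency half, by contrast, is correct and goes beyond the paper. The paper settles $SC$ for all three methods at once via the coincidence with row sum on $\mathcal{R}_R$ and Lemma~\ref{Lemma32}; you instead verify directly that row sum is self-consistent there. The forced decomposition into $m$ complete unweighted round-robins, the first summation giving $s_i\ge s_j$, and the second giving $s_j\ge s_i$ because the two opponent sets differ from $N$ by exactly one element, together show that any admissible hypothesis forces $s_i=s_j$ and can never carry a strict trigger of either kind. This is sound, and it exposes what the citation-based proof hides: condition~\ref{SC_con3} cannot bite precisely because all opponent sets coincide up to one object.
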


\begin{proof}
Due to axioms \emph{agreement} \citep[Property~3]{Chebotarev1994} and \emph{score consistency} \citep{Gonzalez-DiazHendrickxLohmann2013}, the generalized row sum and least squares ranking methods coincide with the row sum on the set of $\mathcal{R}_R$, so Lemmata~\ref{Lemma31} and \ref{Lemma32} provide $IIM$ and $SC$, respectively.
\end{proof}

Perhaps it is not by chance that characterizations of the row sum method were suggested on this -- or even more restricted -- domain \citep{Young1974, HanssonSahlquist1976, Rubinstein1980, NitzanRubinstein1981, Henriet1985, Bouyssou1992}.

\subsection{Weakening of independence of irrelevant matches} \label{Sec42}

For the relaxation of $IIM$, a property discussed by \citet{Chebotarev1994} will be used.

\begin{definition} \label{Def41}
\emph{Macrovertex} \citep[Definition~3.1]{Chebotarev1994}:
Let $(N,R,M) \in \mathcal{R}^n$ be a ranking problem.
Object set $V \subseteq N$ is called \emph{macrovertex} if $m_{ik} = m_{jk}$ for all $X_i, X_j \in V$ and $X_k \in N \setminus V$.
\end{definition}

Objects in a macrovertex have the same number of comparisons against any object outside the macrovertex. The comparison structure in $V$ and $N \setminus V$ can be arbitrary. The existence of a macrovertex depends only on the matches matrix $M$, or, in other words, on the comparison multigraph of the ranking problem.

\begin{axiom} \label{Axiom41}
\emph{Macrovertex independence ($MVI$)} \citep[Property~8]{Chebotarev1994}:
Let $V \subseteq N$ be a macrovertex in ranking problems $(N,T),(N,T') \in \mathcal{R}^n$ and $X_i, X_j \in V$ be two different objects such that $(N,T)$ and $(N,T')$ are identical but $t'_{ij} \neq t_{ij}$.
Scoring procedure $f: \mathcal{R}^n \to \mathbb{R}^n$ is called \emph{macrovertex independent} if $f_k(N,T) \geq f_\ell(N,T) \Rightarrow f_k(N,T') \geq f_\ell(N,T')$ for all $X_k, X_\ell \in N \setminus V$.
\end{axiom}

Macrovertex independence says that the order of objects outside a macrovertex is independent of the number and result of comparisons between the objects inside the macrovertex.

\begin{corollary} \label{Col41}
$IIM$ implies $MVI$.
\end{corollary}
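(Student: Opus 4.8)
The plan is to show that macrovertex independence is just a restriction of independence of irrelevant matches to a particular kind of change in the matches matrix, so that any scoring procedure satisfying $IIM$ automatically satisfies $MVI$. The essential observation is that the hypothesis of $MVI$ — altering $t_{ij}$ for two objects $X_i, X_j$ that lie inside a macrovertex $V$, while asking about the relative order of two objects $X_k, X_\ell$ outside $V$ — involves a comparison (the one between $X_i$ and $X_j$) that does not touch $X_k$ or $X_\ell$; it is therefore an "irrelevant match" from the point of view of the pair $\{X_k, X_\ell\}$.

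First I would fix a macrovertex $V \subseteq N$ and ranking problems $(N,T), (N,T') \in \mathcal{R}^n$ that are identical except that $t'_{ij} \neq t_{ij}$ for some $X_i, X_j \in V$, and fix $X_k, X_\ell \in N \setminus V$. Since $V$ is a macrovertex, $X_i \neq X_j$, and from $X_k, X_\ell \in N \setminus V$ we get that $X_i, X_j, X_k, X_\ell$ are four distinct objects, with the single modified entry $t'_{ij}$ involving none of $X_k, X_\ell$. This is exactly the configuration in Axiom~\ref{Axiom31} (with the roles of $(X_i, X_j)$ there played by $(X_k, X_\ell)$ and the roles of $(X_k, X_\ell)$ there played by $(X_i, X_j)$). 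Hence if $f$ is independent of irrelevant matches, $f_k(N,T) \geq f_\ell(N,T)$ implies $f_k(N,T') \geq f_\ell(N,T')$; as $X_k, X_\ell$ were arbitrary outside $V$, $f$ satisfies $MVI$.

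There is essentially no obstacle here: the entire content is checking that the four objects appearing in the $MVI$ hypothesis are distinct and that the altered match does not involve the pair whose order we are tracking — and both are immediate from the definitions, because the objects whose comparison changes lie in $V$ while the objects whose order is protected lie in $N \setminus V$, and these two sets are disjoint. I would also note in passing that the macrovertex property of $V$ plays no role in deriving $MVI$ from $IIM$; it is simply part of the statement of the weaker axiom, and it is what makes $MVI$ strictly weaker (since $MVI$ only constrains $f$ when the changed comparison sits inside a macrovertex, whereas $IIM$ constrains $f$ for every remote comparison). If a single-sentence proof is preferred, it suffices to say: the hypotheses of $MVI$ are a special case of the hypotheses of $IIM$ applied to the pair $\{X_k, X_\ell\}$, so the conclusion of $IIM$ gives the conclusion of $MVI$.
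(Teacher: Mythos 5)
Your proof is correct and is exactly the argument the paper intends (the paper states Corollary~\ref{Col41} without proof, treating it as immediate): the $MVI$ hypothesis is a special case of the $IIM$ hypothesis because the altered comparison lies inside $V$ while the tracked pair lies in the disjoint set $N \setminus V$, so the four objects are distinct and $IIM$ applies with the roles of the two pairs swapped. Your side remark that the macrovertex structure plays no role in the implication is also accurate.
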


Note that if $V$ is a macrovertex, then $N \setminus V$ is not necessarily another macrovertex. Hence the 'dual' of property $MVI$ can be introduced.

\begin{axiom} \label{Axiom42}
\emph{Macrovertex autonomy ($MVA$)}:
Let $V \subseteq N$ be a macrovertex in ranking problems $(N,T),(N,T') \in \mathcal{R}^n$ and $X_k, X_\ell \in N \setminus V$ be two different objects such that $(N,T)$ and $(N,T')$ are identical but $t'_{k \ell} \neq t_{k \ell}$.
Scoring procedure $f: \mathcal{R}^n \to \mathbb{R}^n$ is called \emph{macrovertex autonomous} if $f_i(N,T) \geq f_j(N,T) \Rightarrow f_i(N,T') \geq f_j(N,T')$ for all $X_i, X_j \in V$.
\end{axiom}

Macrovertex autonomy says that the order of objects inside a macrovertex is not influenced by the number and result of comparisons between the objects outside the macrovertex.

\begin{corollary} \label{Col42}
$IIM$ implies $MVA$.
\end{corollary}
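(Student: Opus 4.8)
The statement to prove is Corollary~\ref{Col42}: $IIM$ implies $MVA$.

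\medskip

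The plan is to derive $MVA$ directly from $IIM$ by observing that every hypothesis of $MVA$ is a special instance of the hypotheses of $IIM$. First I would unpack the definition of macrovertex autonomy: we are given a macrovertex $V \subseteq N$, two ranking problems $(N,T),(N,T') \in \mathcal{R}^n$ that differ only in the entry $t_{k\ell}$ for some $X_k, X_\ell \in N \setminus V$, and we must show that the relative order of any two objects $X_i, X_j \in V$ is preserved. The key point is that the four objects $X_i, X_j, X_k, X_\ell$ are pairwise distinct: $X_i, X_j \in V$ while $X_k, X_\ell \notin V$, and $X_k \neq X_\ell$ by assumption, $X_i \neq X_j$ by assumption. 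Moreover $(N,T)$ and $(N,T')$ are identical except at the single entry $t_{k\ell}$, so they also satisfy $t'_{ji} = t_{ji}$ and all the other identities required. Hence the pair $(N,T), (N,T')$ together with the four objects $X_i, X_j, X_k, X_\ell$ is exactly an instance of the configuration in Axiom~\ref{Axiom31} ($IIM$).

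\medskip

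Given that, I would conclude as follows. Since $f$ is independent of irrelevant matches, $f_i(N,T) \geq f_j(N,T)$ implies $f_i(N,T') \geq f_j(N,T')$ for this particular quadruple of objects. But $X_i$ and $X_j$ were arbitrary distinct objects of $V$, so this is precisely the conclusion demanded by $MVA$. Therefore any scoring procedure satisfying $IIM$ also satisfies $MVA$. One small subtlety worth remarking on is that the definition of $MVA$ concerns a single change $t'_{k\ell} \neq t_{k\ell}$, which matches the single-entry change in $IIM$; if one wanted to allow simultaneous changes to several remote comparisons among objects outside $V$, one would invoke the sequential-application remark following Axiom~\ref{Axiom31}, but for the corollary as stated the single-step version suffices.

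\medskip

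I do not expect any real obstacle here — the proof is essentially a remark that the premises of $MVA$ form a subset of the premises of $IIM$, exactly parallel to the reasoning behind Corollary~\ref{Col41}. The only thing to be careful about is the distinctness of the four objects, which follows immediately from $V \cap (N \setminus V) = \emptyset$ together with the stated inequalities $X_i \neq X_j$ and $X_k \neq X_\ell$; everything else is a direct appeal to Axiom~\ref{Axiom31}.
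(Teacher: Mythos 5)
Your proposal is correct and matches the paper's (implicit) reasoning: the paper states Corollary~\ref{Col42} without proof precisely because, as you observe, the configuration in Axiom~\ref{Axiom42} — with $X_i, X_j \in V$ and $X_k, X_\ell \in N \setminus V$ necessarily four distinct objects — is a special instance of the configuration in Axiom~\ref{Axiom31}, so the conclusion of $MVA$ follows directly from $IIM$. Your remarks on distinctness and on the single-entry change are exactly the right points to check, and nothing further is needed.
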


Similarly to $IIM$, changing the matches matrix -- as allowed by properties $MVI$ and $MVA$ -- may lead to an unconnected ranking problem.

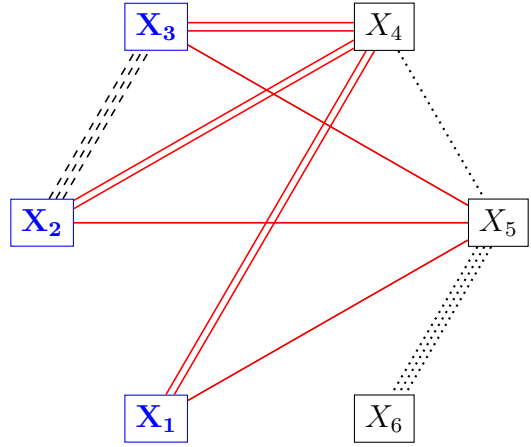
\begin{figure}[htbp]
\centering
\caption{The comparison multigraph of Example~\ref{Examp41}}
\label{Fig41}

\begin{tikzpicture}[scale=1, auto=center, transform shape]
\tikzstyle{every node}=[draw,shape=rectangle];
  \node[color=blue] (n1) at (240:3) {\textcolor{blue}{$\mathbf{X_1}$}};
  \node[color=blue] (n2) at (180:3) {\textcolor{blue}{$\mathbf{X_2}$}};
  \node[color=blue] (n3) at (120:3) {\textcolor{blue}{$\mathbf{X_3}$}};
  \node (n4) at (60:3)  {$X_4$};
  \node (n5) at (0:3) {$X_5$};
  \node (n6) at (300:3)   {$X_6$};
  
  \foreach \from/\to in {n1/n5,n2/n5,n3/n5}
    \draw (\from) -- (\to);
\draw[transform canvas={xshift=0.3ex},color=red,semithick](n1) -- (n4);
\draw[transform canvas={xshift=-0.3ex},color=red,semithick](n1) -- (n4);
\draw[color=red,semithick](n1) -- (n5);
\draw[transform canvas={yshift=0.3ex},color=red,semithick](n2) -- (n4);
\draw[transform canvas={yshift=-0.3ex},color=red,semithick](n2) -- (n4);
\draw[color=red,semithick](n2) -- (n5);
\draw[transform canvas={yshift=0.3ex},color=red,semithick](n3) -- (n4);
\draw[transform canvas={yshift=-0.3ex},color=red,semithick](n3) -- (n4);
\draw[color=red,semithick](n3) -- (n5);
\draw[transform canvas={xshift=0.5ex},dashed,semithick](n2) -- (n3);
\draw[transform canvas={xshift=0ex},dashed,semithick](n2) -- (n3);
\draw[transform canvas={xshift=-0.5ex},dashed,semithick](n2) -- (n3);
\draw[transform canvas={xshift=0.5ex},dotted,thick](n5) -- (n6);
\draw[transform canvas={xshift=0ex},dotted,thick](n5) -- (n6);
\draw[transform canvas={xshift=-0.5ex},dotted,thick](n5) -- (n6);
\draw[transform canvas={yshift=0ex},dotted,thick](n4) -- (n5);
\end{tikzpicture}
\end{figure}

\begin{example} \label{Examp41}
Consider a ranking problem with the comparison multigraph in Figure~\ref{Fig41}.
The object set $V = \{ \mathbf{X_1,X_2,X_3} \}$ is a macrovertex as the number of (red) edges from any node inside $V$ to any node outside $V$ is the same (two to $X_4$, one to $X_5$, and zero to $X_6$). $V$ remains a macrovertex if comparisons inside $V$ (represented by dashed edges) or comparisons outside $V$ (dotted edges) are changed.

Macrovertex independence requires that the relative ranking of $X_4$, $X_5$, and $X_6$ does not depend on the number and result of comparisons between the objects $X_1$, $X_2$, and $X_3$.

Macrovertex autonomy requires that the relative ranking of $X_1$, $X_2$, and $X_3$ does not depend on the number and result of comparisons between the objects $X_4$, $X_5$, and $X_6$ 

The implications of $MVI$ and $MVA$ are clearly different since object set $N \setminus V = \{ X_4, X_5, X_6 \}$ is not a macrovertex because $m_{14} = 2 \neq 1 = m_{15}$.
\end{example}

\begin{corollary} \label{Col43}
The row sum method satisfies macrovertex independence and macrovertex autonomy.
\end{corollary}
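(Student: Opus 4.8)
The plan is to obtain the claim with essentially no work by chaining results already established. By Lemma~\ref{Lemma31}, the row sum method is independent of irrelevant matches; by Corollary~\ref{Col41}, $IIM$ implies $MVI$; and by Corollary~\ref{Col42}, $IIM$ implies $MVA$. Composing these three facts, the row sum method satisfies both macrovertex independence and macrovertex autonomy, which is exactly the assertion. I would state the proof in precisely this one-line form.

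If a self-contained argument that does not route through Corollaries~\ref{Col41} and~\ref{Col42} is preferred, I would verify the two properties directly from Definition~\ref{Def21}, namely $s_i(N,R,M) = \sum_{X_k \in N} r_{ik}$. Fix a macrovertex $V \subseteq N$. In the setting of $MVI$ the two problems differ only in the entry $t_{ij}$ with $X_i, X_j \in V$, so the only entries of the results matrix that change are $r_{ij}$ and $r_{ji}$; since both of their row indices lie in $V$, the row sum $s_k$ of every object $X_k \in N \setminus V$ is unaffected, hence the order of the objects outside $V$ is preserved. In the setting of $MVA$ the two problems differ only in $t_{k\ell}$ with $X_k, X_\ell \in N \setminus V$, so the only entries of $R$ that change are $r_{k\ell}$ and $r_{\ell k}$, whose row indices lie in $N \setminus V$; consequently $s_i$ is unchanged for every $X_i \in V$, so the order of the objects inside $V$ is preserved.

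There is no real obstacle here: the only point requiring a moment's attention is that modifying a single entry $t_{ij}$ of the tournament matrix alters two entries of $R$ (both $r_{ij}$ and $r_{ji}$, and also $m_{ij}$ and $m_{ji}$), but the relevant row sums are insulated from these changes precisely because the changed comparison has both endpoints on one side of the macrovertex partition while the objects whose relative rank is at stake lie on the other side. One may also observe that the macrovertex property of $V$ survives the perturbations allowed by $MVI$ and $MVA$, since it constrains only the cross entries $m_{ik}$ with $X_i \in V$ and $X_k \in N \setminus V$, none of which is touched in either case.
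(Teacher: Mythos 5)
Your first paragraph is exactly the paper's proof: it derives the corollary as an immediate consequence of Lemma~\ref{Lemma31} together with Corollaries~\ref{Col41} and \ref{Col42}. The additional direct verification from Definition~\ref{Def21} is correct and a nice self-contained supplement, but the main route is the same as the paper's.
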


\begin{proof}
It is an immediate consequence of Lemma~\ref{Lemma31} and Corollaries~\ref{Col41} and \ref{Col42}.
\end{proof}

\begin{lemma} \label{Lemma41}
The generalized row sum and least squares methods are macrovertex independent and macrovertex autonomous.
\end{lemma}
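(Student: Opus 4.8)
The plan is to exploit the explicit linear-algebraic descriptions of both methods (Definitions~\ref{Def22} and~\ref{Def23}) together with the block structure forced by a macrovertex. Let $V \subseteq N$ be a macrovertex in $(N,R,M)$, and order the objects so that those in $V$ come first. The defining property $m_{ik} = m_{jk}$ for all $X_i,X_j \in V$, $X_k \in N\setminus V$, means that the Laplacian $L$ has a very special structure: the off-diagonal block $L_{V,\,N\setminus V}$ has identical rows (each entry in column $k$ is $-m_{ik}$, independent of which $i\in V$ we pick), and symmetrically $L_{N\setminus V,\,V}$ has identical columns. First I would record this block form of $L$ explicitly and note that $m$ (the maximal number of comparisons) and hence the coefficient $(1+\varepsilon mn)$ are unaffected by the changes allowed in $MVI$ and $MVA$, since those changes touch only entries $t_{k\ell}$ with both indices on the same side of the partition, and $m$ is a global maximum over all pairs (one must check the changed problem still lies in $\mathcal{R}^n$; this is part of the hypothesis of the axioms).

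For macrovertex independence, I want to show that the relative order of $X_k,X_\ell \in N\setminus V$ is determined by data insensitive to the internal comparisons of $V$. The cleanest route is to appeal to Lemma~\ref{Lemma32}: both methods are self-consistent, and self-consistency only constrains relative rankings through the opponent structure and through results against opponents of comparable strength. However, self-consistency alone will not pin down the order, so instead I would argue directly from the linear systems. Write the generalized row sum system $(I+\varepsilon L)\mathbf{x} = (1+\varepsilon mn)\mathbf{s}$ in block form. Because the rows of $L$ indexed by $V$ agree off the block $V\times V$, changing comparisons strictly inside $V$ alters only the $V\times V$ block of $L$ and only the entries $s_i$ for $i\in V$; the equations indexed by $N\setminus V$ see the $V$-variables only through the aggregate $\sum_{X_i\in V} x_i$ weighted by the common value $m_{\cdot k}$. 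So it suffices to show that $\sum_{X_i\in V} x_i$ is invariant under the allowed change. Summing the $|V|$ equations indexed by $V$ and using skew-symmetry of $R$ (so that internal results cancel: $\sum_{i,j\in V} r_{ij}=0$) together with the fact that internal edges contribute $0$ to the summed Laplacian rows, one obtains a closed equation for $\sum_{X_i\in V} x_i$ in terms of data outside the block $V\times V$ only. Hence the subsystem governing $\mathbf{x}|_{N\setminus V}$ is untouched, and the order of any $X_k,X_\ell\in N\setminus V$ is preserved — this gives $MVI$. For least squares the same computation applies to $L\mathbf{q}=\mathbf{s}$ with the normalization $\mathbf{e}^\top\mathbf{q}=0$; one gets the analogous closed relation for $\sum_{X_i\in V} q_i$.

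For macrovertex autonomy the argument is the mirror image: now the change is confined to the $(N\setminus V)\times(N\setminus V)$ block, and I would show that the relevant aggregate $\sum_{X_k\in N\setminus V}$ of the relevant variable, together with the internal block $V\times V$ of $L$ and $\mathbf{s}|_V$, suffices to determine $\mathbf{x}|_V$ (respectively $\mathbf{q}|_V$) up to the global normalization, so that relative rankings inside $V$ are invariant. The main obstacle I anticipate is the bookkeeping needed to justify that the reduced subsystem is genuinely well-posed after elimination — i.e.\ that one may legitimately solve out the ``other side'' and be left with a consistent, uniquely solvable system in the variables of interest. For generalized row sum this is automatic since $I+\varepsilon L$ is positive definite for $\varepsilon>0$, so every principal subsystem obtained by Schur-complement elimination is invertible; for least squares one must instead track the kernel spanned by $\mathbf{e}$ and argue on each connected component, which is where I would spend the most care. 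Once the block/Schur-complement reduction is set up cleanly, the invariance of the aggregates is a short skew-symmetry computation, and $MVI$ and $MVA$ follow for both methods.
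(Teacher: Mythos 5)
Your route is genuinely different from the paper's. The paper obtains $MVI$ by citing \citet[Property~8]{Chebotarev1994} and obtains $MVA$ by an extremal argument: pick $X_k,X_\ell\in V$ maximizing and minimizing $x_g'-x_g$ over $V$, subtract the two defining equations, and use $m_{kh}=m_{\ell h}$ for $X_h\in N\setminus V$ to reach a sign contradiction in \eqref{eq5}. Your block-elimination argument for $MVI$ is sound and in fact proves something stronger: because $m_{hi}$ is constant over $X_i\in V$ for each fixed $X_h\notin V$, the equations indexed by $N\setminus V$ together with the summed $V$-equation form a closed, uniquely solvable system in $\left(\sum_{X_i\in V}x_i,\ \mathbf{x}|_{N\setminus V}\right)$ whose coefficients and right-hand sides are untouched by internal changes, so the outside ratings are literally unchanged, not merely reordered. (Two caveats: the global maximum $m$ \emph{can} move when an entry inside $V$ is altered --- ``both indices on the same side'' does not prevent the maximum from being attained there --- though this only rescales $\mathbf{x}$ and is harmless for the induced ranking; and for least squares the possibly unconnected modified problem must be handled by the uniqueness convention the paper adopts.)

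The gap is in the $MVA$ half, where the ``mirror image'' is not exact. For $MVI$ the outside subsystem is genuinely invariant; for $MVA$ the inside subsystem is \emph{not}: the equations indexed by $V$ read $\left((1+\varepsilon c)I+\varepsilon L_V^{\mathrm{int}}\right)\mathbf{x}|_V=(1+\varepsilon mn)\mathbf{s}|_V+\varepsilon\tau\,\mathbf{e}_V$, where $c=\sum_{X_h\notin V}m_{ih}$ is the common external degree, $L_V^{\mathrm{int}}$ is the Laplacian of the subgraph induced on $V$, and $\tau=\sum_{X_h\notin V}m_{ih}x_h$ --- and $\tau$ \emph{does} change when comparisons outside $V$ are modified. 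So knowing that the aggregate ``together with the internal block and $\mathbf{s}|_V$ determines $\mathbf{x}|_V$'' is not enough; you must additionally show that the perturbation $\varepsilon(\tau'-\tau)\mathbf{e}_V$ of the right-hand side shifts every coordinate of $\mathbf{x}|_V$ by the \emph{same} amount. This does hold, precisely because $L_V^{\mathrm{int}}\mathbf{e}_V=\mathbf{0}$ makes $\mathbf{e}_V$ an eigenvector of the $V\times V$ block, whence $x_i'-x_i=\varepsilon(\tau'-\tau)/(1+\varepsilon c)$ for all $X_i\in V$; but this observation is the crux of $MVA$ --- it is exactly what the paper's extremal argument establishes --- and your sketch never identifies it, hiding it behind ``up to the global normalization.'' The least squares case inherits the same issue plus the degeneracy $c=0$ (no edges leaving $V$), which is exactly where the paper must split into connected and unconnected cases.
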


\begin{proof}
\citet[Property~8]{Chebotarev1994} has shown that generalized row sum satisfies $MVI$. The proof remains valid in the limit $\varepsilon \to \infty$ if the least squares ranking is defined to be unique, for instance, the sum of ratings of objects in all components of the comparison multigraph is zero.

Consider $MVA$.
Let $\mathbf{s} = \mathbf{s}(N,T)$, $\mathbf{s}' = \mathbf{s}(N,T')$, $\mathbf{x} = \mathbf{x}(\varepsilon)(N,T)$, $\mathbf{x}' = \mathbf{x}(\varepsilon)(N,T')$ and $\mathbf{q} = \mathbf{q}(N,T)$, $\mathbf{q}' = \mathbf{q}(N,T')$.
Let $V$ be a macrovertex and $X_i, X_j \in V$ be two arbitrary objects.
Suppose to the contrary that $x_i \geq x_j$, but $x_i' < x_j'$, hence $x_i' - x_i <  x_j' - x_j$.
Let $x_k' - x_k = \max_{X_g \in V} (x_g' - x_g)$ and $x_\ell' - x_\ell = \min_{X_g \in V} (x_g' - x_g)$, therefore $x_k' - x_k > x_\ell' - x_\ell$ and $x_k' - x_k \geq x_g' - x_g \geq x_\ell' - x_\ell$ for any object $X_g \in V$.

For object $X_k$, definition~\ref{Def22} results  in
\begin{equation} \label{eq1}
x_k = (1+\varepsilon m n)s_k + \varepsilon \sum_{X_g \in V} m_{kg} (x_g - x_k) + \varepsilon \sum_{X_h \in N \setminus V} m_{kh} (x_h - x_k).
\end{equation}
Apply \eqref{eq1} for object $X_\ell$. The difference of these two equations is
\begin{eqnarray} \label{eq2}
x_k - x_\ell & = & (1+\varepsilon m n) (s_k - s_\ell) + \varepsilon \sum_{X_g \in V} \left[ m_{kg} (x_g - x_k) - m_{\ell g} (x_g - x_\ell) \right] + \nonumber \\
& & + \varepsilon \sum_{X_h \in N \setminus V} \left[ m_{kh} (x_h - x_k) -  m_{\ell h} (x_h - x_\ell) \right].
\end{eqnarray}
Note that $m_{kh} = m_{\ell h}$ for all $X_h \in N \setminus V$ since $V$ is a macrovertex, therefore \eqref{eq2} is equivalent to
\begin{eqnarray} \label{eq3}
\left( 1 + \varepsilon \sum_{X_h \in N \setminus V} m_{kh} \right) \left( x_k - x_\ell \right) & = & (1+\varepsilon m n) (s_k - s_\ell) + \nonumber \\
 & & + \varepsilon \sum_{X_g \in V} \left[ m_{kg} (x_g - x_k) - m_{\ell g} (x_g - x_\ell) \right].
\end{eqnarray}
Apply \eqref{eq3} for the ranking problem $(N,T')$:
\begin{eqnarray} \label{eq4}
\left( 1 + \varepsilon \sum_{X_h \in N \setminus V} m_{kh}' \right) \left( x_k' - x_\ell' \right) & = & (1+\varepsilon m n) (s_k' - s_\ell') + \nonumber \\
 & & + \varepsilon \sum_{X_g \in V} \left[ m_{kg}' (x_g' - x_k') - m_{\ell g}' (x_g' - x_\ell') \right].
\end{eqnarray}
Let $\Delta_{ij} = (x_i' - x_j') - (x_i - x_j)$ for all $X_i, X_j \in V$. Note that $m_{kh}' = m_{kh}$ for all $X_h \in N \setminus V$, $m_{kg}' = m_{kg}$ and $m_{\ell g}' = m_{\ell g}$ for all $X_g \in V$ as well as $s_k' = s_k$ and $s_\ell' = s_\ell$ since only comparisons outside $V$ may change. Take the difference of \eqref{eq4} and \eqref{eq3}
\begin{equation} \label{eq5}
\left( 1 + \varepsilon \sum_{X_h \in N \setminus V} m_{kh} \right) \Delta_{k \ell} = \varepsilon \sum_{X_g \in V} \left( m_{kg} \Delta_{gk} - m_{\ell g} \Delta_{g \ell} \right).
\end{equation}
Due to the choice of indices $k$ and $\ell$, $\Delta_{k \ell} > 0$ and $\Delta_{gk} \leq 0$, $\Delta_{g \ell} \geq 0$. It means that the left-hand side of \eqref{eq5} is positive, while its right-hand side is nonpositive, leading to a contradiction. Therefore only $x_i' - x_i =  x_j' - x_j$, the condition required by $MVA$, can hold.

The same derivation can be implemented for the least squares method. With the notation $\Delta_{ij} = (q_i' - q_j') - (q_i - q_j)$ for all $X_i, X_j \in V$, we get -- analogously to \eqref{eq5} as $\varepsilon \to \infty$ --
\begin{equation} \label{eq6}
\sum_{X_h \in N \setminus V} m_{kh} \Delta_{k \ell} = \sum_{X_g \in V} \left( m_{kg} \Delta_{gk} - m_{\ell g} \Delta_{g \ell} \right).
\end{equation}
But $\Delta_{k \ell} > 0$, $\Delta_{gk} \leq 0$, and $\Delta_{g \ell} \geq 0$ is not enough for a contradiction now: \eqref{eq6} may hold if $\sum_{X_h \in N \setminus V} m_{kh} = 0$, namely, $X_k$ is not connected to any object outside the macrovertex $V$ as well as $\Delta_{gk} = 0$ and $\Delta_{g \ell} = 0$ when $m_{kg} = m_{\ell g} > 0$.
However, if there exists no object $X_g \in N \setminus V$ such that $m_{kg} = m_{\ell g} > 0$, then there is no connection between object sets $V$ and $N \setminus V$ since $V$ is a macrovertex, and we have two independent ranking subproblems, where the least squares ranking is unique according to the extension of definition~\ref{Def23}, so $MVA$ holds.
On the other hand, if there exists an object $X_g \in N \setminus V$ such that $m_{kg} = m_{\ell g} > 0$, then $\Delta_{gk} = 0$ and $\Delta_{g \ell} = 0$, but $\Delta_{k \ell} = \Delta_{g \ell} - \Delta_{gk} > 0$, which is a contradiction.
Therefore $q_i' - q_i =  q_j' - q_j$, the condition required by $MVA$, holds.
\end{proof}

Lemma~\ref{Lemma41} leads to another possibility result.

\begin{proposition} \label{Prop44}
The generalized row sum and least squares methods are macrovertex autonomous, macrovertex independent and self-consistent.
\end{proposition}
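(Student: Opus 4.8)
The plan is to obtain Proposition~\ref{Prop44} as an immediate corollary of the results already established, without any fresh computation. Macrovertex independence and macrovertex autonomy of both the generalized row sum and the least squares methods are exactly the content of Lemma~\ref{Lemma41}, while their self-consistency is Lemma~\ref{Lemma32} (which itself rests on the characterization of self-consistent scoring procedures in \citet[Theorem~5]{ChebotarevShamis1998a}). Hence the proof consists of little more than invoking these two lemmata and observing that the three required properties -- $MVI$, $MVA$ and $SC$ -- hold simultaneously for the same two procedures. The only point to spell out is that no conflict arises from combining the lemmata: they concern the same scoring methods on the same (unrestricted) domain $\mathcal{R}^n$, so their conjunction is well posed.

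Since the statement is just a bundling of facts proved earlier, there is no genuine obstacle here; all the difficulty has been front-loaded into Lemma~\ref{Lemma41} and Lemma~\ref{Lemma32}. For the record, the technically delicate ingredient is the $MVA$ half of Lemma~\ref{Lemma41}: for generalized row sum one isolates the extremal differences $x_k' - x_k$ and $x_\ell' - x_\ell$ over the macrovertex $V$, subtracts the defining equations, uses $m_{kh} = m_{\ell h}$ for $X_h \in N \setminus V$ to cancel the ``outside'' terms, and reads off from \eqref{eq5} that a strictly positive left-hand side would be matched against a nonpositive right-hand side -- forcing $\Delta_{k\ell} = 0$. For least squares the analogous identity \eqref{eq6} no longer yields a contradiction from sign considerations alone, so one splits into the case where $X_k$ has no neighbour outside $V$ (the comparison multigraph then separates into independent subproblems and the uniqueness convention extending Definition~\ref{Def23} settles the claim) and the case where a common neighbour $X_g \in N \setminus V$ of $X_k$ and $X_\ell$ exists (which restores the contradiction). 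Self-consistency of the two methods is taken care of entirely by the Chebotarev--Shamis characterization cited in Lemma~\ref{Lemma32}. Putting these pieces together yields the proposition, and I would not expect the write-up to exceed one or two sentences.
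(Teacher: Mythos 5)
Your proposal is correct and matches the paper exactly: Proposition~\ref{Prop44} is stated immediately after the remark that ``Lemma~\ref{Lemma41} leads to another possibility result,'' i.e.\ it is obtained precisely by combining Lemma~\ref{Lemma41} ($MVI$ and $MVA$) with Lemma~\ref{Lemma32} ($SC$), just as you do. Your recap of the technical content of Lemma~\ref{Lemma41} is accurate but not needed for this proposition.
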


This statement turns out to be more general than the one obtained by restricting the domain to round-robin ranking problems in Proposition~\ref{Prop43}.

\begin{corollary} \label{Col44}
$MVA$ or $MVI$ implies $IIM$ on the domain of round-robin ranking problems $\mathcal{R}_R$.
\end{corollary}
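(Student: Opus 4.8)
The plan is to exploit the simple observation that, in a round-robin ranking problem, \emph{every} subset of the object set is a macrovertex. Indeed, by Definition~\ref{Def41} a set $V \subseteq N$ is a macrovertex whenever $m_{ik} = m_{jk}$ for all $X_i, X_j \in V$ and $X_k \in N \setminus V$; in a problem from $\mathcal{R}_R$ all off-diagonal entries of $M$ equal the common value $m$, so this holds for any choice of $V$. Two further points will be used repeatedly: the macrovertex property depends only on $M$, and a change of a single entry $t_{k\ell}$ that keeps the problem in $\mathcal{R}_R$ alters only $r_{k\ell}$, leaving $M$ (hence the macrovertex structure) intact.

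To derive $IIM$ on $\mathcal{R}_R$, I would fix $(N,T),(N,T') \in \mathcal{R}_R$ that are identical except for $t'_{k\ell} \neq t_{k\ell}$, with $X_i,X_j,X_k,X_\ell$ four distinct objects, and assume $f_i(N,T) \geq f_j(N,T)$; the goal is $f_i(N,T') \geq f_j(N,T')$ (the claim is vacuous unless $n \geq 4$, so assume this). If $f$ is macrovertex autonomous, apply Axiom~\ref{Axiom42} with the macrovertex $V := \{X_i,X_j\}$: the two modified objects $X_k, X_\ell$ lie in $N \setminus V$, so $MVA$ preserves the order of the objects of $V$, i.e.\ $f_i(N,T) \geq f_j(N,T) \Rightarrow f_i(N,T') \geq f_j(N,T')$, which is exactly $IIM$. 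If instead $f$ is macrovertex independent, apply Axiom~\ref{Axiom41} with the macrovertex $V := \{X_k,X_\ell\}$: now the modified comparison $t_{k\ell}$ is internal to $V$ while $X_i,X_j \in N \setminus V$, so $MVI$ preserves the relative order of $X_i$ and $X_j$, again delivering $IIM$.

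The argument has no real obstacle; the only point needing care is matching the roles correctly — in $MVA$ the preserved objects sit \emph{inside} the macrovertex while the altered comparison is \emph{outside}, and in $MVI$ it is the other way round — and then checking the bookkeeping: that $V$ is a macrovertex in \emph{both} $(N,T)$ and $(N,T')$ (immediate, since the two problems share $M$), and that $N \setminus V$ is large enough to contain the objects required in each role (which holds because $X_i,X_j,X_k,X_\ell$ are pairwise distinct and $n \geq 4$). Combined with Corollaries~\ref{Col41} and \ref{Col42}, this would in fact show that $MVI$, $MVA$, and $IIM$ are mutually equivalent on $\mathcal{R}_R$.
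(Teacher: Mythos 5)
Your proof is correct and takes essentially the same approach as the paper: the paper likewise applies $MVA$ with the macrovertex $V=\{X_i,X_j\}$ and $MVI$ with the macrovertex $V'=\{X_k,X_\ell\}$, relying on the fact that any subset of $N$ is a macrovertex in a round-robin problem. Your extra bookkeeping (that $M$ is unchanged because both problems must remain in $\mathcal{R}_R$, and that the four objects are distinct so the roles can be filled) merely makes explicit what the paper leaves implicit.
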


\begin{proof}
Let $(N,T),(N,T') \in \mathcal{R}_R^n$ be two ranking problems and $X_i,X_j,X_k, X_\ell \in N$ be four different objects such that $(N,T)$ and $(N,T')$ are identical but $t'_{k \ell} \neq t_{k \ell}$.

Consider the macrovertex $V = \{ X_i,X_j \}$. Macrovertex autonomy means $f_i(N,T) \geq f_j(N,T) \Rightarrow f_i(N,T') \geq f_j(N,T')$, the condition required by $IIM$.

Consider the macrovertex $V' = \{ X_k,X_\ell \}$. Macrovertex independence means $f_i(N,T) \geq f_j(N,T) \Rightarrow f_i(N,T') \geq f_j(N,T')$, the condition required by $IIM$.
\end{proof}

\subsection{Weakening of self-consistency} \label{Sec43}

We think self-consistency is more difficult to debate than independence of irrelevant matches, but, on the basis of the motivation of $SC$ in Section~\ref{Sec32}, there exists an obvious way to soften it by being more tolerant in the case of opponents: $X_i$ is not required to be better than $X_j$ if it achieves the same result against stronger opponents.

\begin{axiom} \label{Axiom43}
\emph{Weak self-consistency} ($WSC$):
A scoring procedure $f: \mathcal{R}^n \to \mathbb{R}^n$ is called \emph{weakly self-consistent} if the following implication holds for any ranking problem $(N,R,M) \in \mathcal{R}^n$ and for any objects $X_i,X_j \in N$:
if there exists a decomposition of the ranking problem $(N,R,M)$ into $m$ unweighted ranking problems -- that is, $R = \sum_{p=1}^m R^{(p)}$, $M = \sum_{p=1}^m M^{(p)}$, and $(N,R^{(p)},M^{(p)}) \in \mathcal{R}_U^n$ is an unweighted ranking problem for all $p = 1,2, \dots ,m$ -- in a way that enables a one-to-one mapping $g^{(p)}$ from $O^{(p)}_i$ onto $O^{(p)}_j$ such that $r_{ik}^{(p)} \geq r_{j \mathfrak{g}^{(p)}(k)}^{(p)}$ and $f_k(N,R,M) \geq f_{\mathfrak{g}^{(p)}(k)}(N,R,M)$ for all $p = 1,2, \dots ,m$ and $X_k \in O_i^{(p)}$, then
$f_i(N,R,M) \geq f_{j}(N,R,M)$, furthermore, $f_i(N,R,M) > f_{j}(N,R,M)$ if $r_{ik}^{(p)} > r_{j \mathfrak{g}^{(p)}(k)}^{(p)}$ for at least one $1 \leq p \leq m$ and $X_k \in O_i^{(p)}$.
\end{axiom}

It can be seen that self-consistency (Axiom~\ref{Axiom32}) formalizes conditions \ref{SC_con1}-\ref{SC_con6}, while weak self-consistency only requires the scoring procedure to satisfy \ref{SC_con1}, \ref{SC_con2}, and \ref{SC_con4}-\ref{SC_con6}.

\begin{corollary} \label{Col45}
$SC$ implies $WSC$.
\end{corollary}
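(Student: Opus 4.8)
The plan is to observe that the hypothesis of weak self-consistency (Axiom~\ref{Axiom43}) is \emph{word for word} the hypothesis of self-consistency (Axiom~\ref{Axiom32}), while the conclusion demanded by $WSC$ is logically weaker than that demanded by $SC$; hence every scoring procedure satisfying $SC$ automatically satisfies $WSC$. In other words, this is just the elementary principle that, with the antecedent held fixed, strengthening the consequent yields a logically stronger axiom, so $SC \Rightarrow WSC$.

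Concretely, I would fix a scoring procedure $f \colon \mathcal{R}^n \to \mathbb{R}^n$ satisfying $SC$, a ranking problem $(N,R,M) \in \mathcal{R}^n$, two objects $X_i, X_j \in N$, and assume the antecedent of $WSC$: there is a decomposition $R = \sum_{p=1}^m R^{(p)}$, $M = \sum_{p=1}^m M^{(p)}$ into unweighted ranking problems together with one-to-one maps $g^{(p)} \colon O_i^{(p)} \leftrightarrow O_j^{(p)}$ such that $r_{ik}^{(p)} \geq r_{j \mathfrak{g}^{(p)}(k)}^{(p)}$ and $f_k(N,R,M) \geq f_{\mathfrak{g}^{(p)}(k)}(N,R,M)$ for all $p$ and all $X_k \in O_i^{(p)}$. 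This is exactly the antecedent of $SC$, so $SC$ delivers $f_i(N,R,M) \geq f_j(N,R,M)$, which is the weak part of the $WSC$ conclusion.

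For the strict part, I would assume in addition that $r_{ik}^{(p)} > r_{j \mathfrak{g}^{(p)}(k)}^{(p)}$ for at least one $p$ and one $X_k \in O_i^{(p)}$. Then the disjunctive trigger in $SC$ -- namely ``$r_{ik}^{(p)} > r_{j \mathfrak{g}^{(p)}(k)}^{(p)}$ \emph{or} $f_k(N,R,M) > f_{\mathfrak{g}^{(p)}(k)}(N,R,M)$ for at least one $p$ and $X_k$'' -- is satisfied (its first disjunct holds), so $SC$ yields $f_i(N,R,M) > f_j(N,R,M)$, which is precisely the strict part of the $WSC$ conclusion. Since the $WSC$ antecedent thus implies both the weak and the strict parts of the $WSC$ consequent, $f$ is weakly self-consistent, and the corollary follows.

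There is essentially no obstacle; the only points to verify carefully are that the two antecedents are genuinely identical (they are) and that the condition triggering strictness in $WSC$ is one of the two disjuncts triggering strictness in $SC$ (it is), so that dropping the $f_k(N,R,M) > f_{\mathfrak{g}^{(p)}(k)}(N,R,M)$ disjunct can only shrink the set of situations in which a strict inequality is demanded. This matches the remark, already made after Axiom~\ref{Axiom43}, that $SC$ formalizes conditions~\ref{SC_con1}--\ref{SC_con6} whereas $WSC$ only formalizes \ref{SC_con1}, \ref{SC_con2}, and \ref{SC_con4}--\ref{SC_con6}.
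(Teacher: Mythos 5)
Your proposal is correct and matches the paper's reasoning: the paper states this corollary without proof, as an immediate consequence of the observation that $SC$ and $WSC$ share the same antecedent while the strictness trigger of $WSC$ is one of the two disjuncts of the strictness trigger of $SC$. Your explicit verification of these two points is exactly the argument the paper leaves implicit.
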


\begin{lemma} \label{Lemma42}
The row sum method is weakly self-consistent.
\end{lemma}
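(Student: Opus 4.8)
The plan is to exploit the additive structure of the row sum and show that the condition on the strength of opponents — the part of $SC$ that row sum cannot honour, cf.\ Corollary~\ref{Col31} — is simply not needed for $WSC$. Recall from Definition~\ref{Def21} that $s_i(N,R,M) = \sum_{X_k \in N} r_{ik}$. So the whole argument will be a bookkeeping computation on the given decomposition $R = \sum_{p=1}^m R^{(p)}$.

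First I would observe that in each unweighted component $(N,R^{(p)},M^{(p)}) \in \mathcal{R}_U^n$ we have $|r_{ik}^{(p)}| \leq m_{ik}^{(p)} \in \{0;1\}$, hence $r_{ik}^{(p)} = 0$ whenever $X_k \notin O_i^{(p)}$. Combined with $r_{ik} = \sum_{p=1}^m r_{ik}^{(p)}$, this yields
\[
s_i(N,R,M) = \sum_{p=1}^m \sum_{X_k \in O_i^{(p)}} r_{ik}^{(p)}, \qquad s_j(N,R,M) = \sum_{p=1}^m \sum_{X_\ell \in O_j^{(p)}} r_{j\ell}^{(p)}.
\]
Since each $g^{(p)}$ is a one-to-one correspondence from $O_i^{(p)}$ onto $O_j^{(p)}$, re-indexing the second sum via $X_\ell = X_{\mathfrak{g}^{(p)}(k)}$ gives $s_j(N,R,M) = \sum_{p=1}^m \sum_{X_k \in O_i^{(p)}} r_{j\mathfrak{g}^{(p)}(k)}^{(p)}$.

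Subtracting the two expressions then produces
\[
s_i(N,R,M) - s_j(N,R,M) = \sum_{p=1}^m \sum_{X_k \in O_i^{(p)}} \left( r_{ik}^{(p)} - r_{j\mathfrak{g}^{(p)}(k)}^{(p)} \right),
\]
and every summand is nonnegative by the hypothesis $r_{ik}^{(p)} \geq r_{j\mathfrak{g}^{(p)}(k)}^{(p)}$, so $s_i(N,R,M) \geq s_j(N,R,M)$. Moreover, if $r_{ik}^{(p)} > r_{j\mathfrak{g}^{(p)}(k)}^{(p)}$ for at least one $p$ and $X_k \in O_i^{(p)}$, the corresponding summand is strictly positive while the others remain nonnegative, hence $s_i(N,R,M) > s_j(N,R,M)$ — exactly the two implications demanded by Axiom~\ref{Axiom43}. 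There is no real obstacle here; the only point requiring a little care is the vanishing of $r_{ik}^{(p)}$ outside the opponent set, which makes the per-component sums line up with the opponent-set mappings. Note that the weaker strict-inequality trigger of $WSC$ (only $r_{ik}^{(p)} > r_{j\mathfrak{g}^{(p)}(k)}^{(p)}$, not $f_k > f_{\mathfrak{g}^{(p)}(k)}$) is precisely what lets this go through, consistently with Corollary~\ref{Col31}.
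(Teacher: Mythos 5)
Your proposal is correct and follows essentially the same route as the paper: express each row sum as the double sum $\sum_{p}\sum_{X_k \in O_i^{(p)}} r_{ik}^{(p)}$, re-index via the bijections $g^{(p)}$, and compare term by term. You merely spell out more carefully the step that $r_{ik}^{(p)}$ vanishes outside $O_i^{(p)}$, which the paper treats as obvious.
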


\begin{proof}
Let $(N,R,M) \in \mathcal{R}^n$ be a ranking problem such that $R = \sum_{p=1}^m R^{(p)}$, $M = \sum_{p=1}^m M^{(p)}$ and $(N,R^{(p)},M^{(p)}) \in \mathcal{R}_U^n$ is an unweighted ranking problem for all $p = 1,2, \dots ,m$.
Let $X_i,X_j \in N$ be two objects and assume that for all $p = 1,2, \dots ,m$ there exists a one-to-one mapping $g^{(p)}$ from $O^{(p)}_i$ onto $O^{(p)}_j$, where $r_{ik}^{(p)} \geq r_{j \mathfrak{g}^{(p)}(k)}^{(p)}$ and $s_k(N,R,M) \geq s_{\mathfrak{g}^{(p)}(k)}(N,R,M)$.

Obviously, $s_i(N,R,M) = \sum_{p=1}^m \sum_{X_k \in O_i^{(p)}} r_{ik} \geq \sum_{p=1}^m \sum_{X_k \in O_j^{(p)}} r_{j \mathfrak{g}^{(p)}(k)} = s_j(N,R,M)$. Furthermore, $s_i(N,R,M) > s_j(N,R,M)$ if $r_{ik}^{(p)} > r_{j \mathfrak{g}^{(p)}(k)}^{(p)}$ for at least one $p = 1,2, \dots ,m$.
\end{proof}

The last possibility result comes immediately. 

\begin{proposition} \label{Prop45}
The row sum method is independent of irrelevant matches and weakly self-consistent.
\end{proposition}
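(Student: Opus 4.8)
The plan is essentially to observe that the statement is the conjunction of two facts already established above, so no genuinely new argument is needed. First I would invoke Lemma~\ref{Lemma31}, which asserts that the row sum method is independent of irrelevant matches; this is immediate from Definition~\ref{Def21}, since $s_i(N,R,M) = \sum_{X_k \in N} r_{ik}$ depends only on the entries of $R$ in row $i$, so changing a remote comparison $r_{k\ell}$ with $\{X_k,X_\ell\} \cap \{X_i,X_j\} = \emptyset$ leaves both $s_i$ and $s_j$ untouched, whence the weak order between $X_i$ and $X_j$ is preserved. Then I would invoke Lemma~\ref{Lemma42}, which asserts that the row sum method is weakly self-consistent. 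Combining the two yields the proposition directly.

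The only point worth spelling out is why we obtain $WSC$ and not full $SC$, and this is really the conceptual content. In the proof of Lemma~\ref{Lemma42} the chain of inequalities $s_i = \sum_{p}\sum_{X_k \in O_i^{(p)}} r_{ik} \geq \sum_{p}\sum_{X_k \in O_j^{(p)}} r_{j\mathfrak{g}^{(p)}(k)} = s_j$ uses only the result inequalities $r_{ik}^{(p)} \geq r_{j\mathfrak{g}^{(p)}(k)}^{(p)}$ and never the opponent-strength inequalities $f_k \geq f_{\mathfrak{g}^{(p)}(k)}$; moreover strictness of the aggregate sum is forced precisely when some result inequality is strict. That matches the hypotheses of $WSC$ exactly, but not those of $SC$, whose strict part also allows strictness to come from a strictly stronger opponent. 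Since Corollary~\ref{Col31} shows the row sum violates $SC$, the passage from $SC$ to $WSC$ is not a cosmetic weakening but exactly what is needed to recover a positive result.

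Accordingly there is no real obstacle here: the proposition is a corollary of Lemmata~\ref{Lemma31} and~\ref{Lemma42}. Its role is to complement the impossibility in Theorem~\ref{Theo31}: whereas $IIM$ and $SC$ are jointly unsatisfiable, $IIM$ and $WSC$ are jointly satisfiable, and the row sum method is an explicit witness. If a self-contained argument were preferred, one would simply re-run the two short proofs in sequence, but citing the two lemmata is cleaner and makes the logical dependence transparent.
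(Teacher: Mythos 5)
Your proposal is correct and matches the paper's proof exactly: Proposition~\ref{Prop45} is established there simply by combining Lemma~\ref{Lemma31} and Lemma~\ref{Lemma42}. The additional commentary on why only $WSC$ (and not $SC$) is recovered is accurate but not needed for the proof itself.
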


\begin{proof}
It follows from Lemmata~\ref{Lemma31} and \ref{Lemma42}.
\end{proof}

According to Lemma~\ref{Lemma42}, the violation of self-consistency by row sum (see Corollary~\ref{Col31}) is a consequence of condition~\ref{SC_con3}: the row sums of $X_i$ and $X_j$ are the same even if  $X_i$ achieves the same result as $X_j$ against stronger opponents.

It is a crucial argument against the use of row sum for ranking in tournaments which are not organized in a round-robin format, supporting the empirical findings of \citet{Csato2017c} for Swiss-system chess team tournaments.

\section{Conclusions} \label{Sec5}

\begin{table}[htbp]
\centering
\caption{Summary of the axioms}
\label{Table1}
\begin{subtable}{\textwidth}
\centering
    \begin{tabularx}{0.9\textwidth}{l CC} \toprule
    Axiom & Abbreviation & Definition \\ \midrule
    Independence of irrelevant matches & $IIM$ & Axiom~\ref{Axiom31} \\
    Self-consistency & $SC$ & Axiom~\ref{Axiom32} \\
    Macrovertex independence & $MVI$ & Axiom~\ref{Axiom41} \\
    Macrovertex autonomy & $MVA$ & Axiom~\ref{Axiom42} \\
    Weak self-consistency & $WSC$ & Axiom~\ref{Axiom43} \\ \bottomrule
    \end{tabularx}
\end{subtable}

\vspace{0.25cm}
\begin{subtable}{\textwidth}
    \begin{tabularx}{\textwidth}{l CCC} \toprule
          & \multicolumn{3}{c}{Is it satisfied by the particular method?} \\
    Axiom & Row sum (Definition~\ref{Def21}) & Generalized row sum (Definition~\ref{Def22}) & Least squares (Definition~\ref{Def23}) \\ \midrule
    Independence of irrelevant matches & \textcolor{PineGreen}{\ding{52}} & \textcolor{BrickRed}{\ding{55}} & \textcolor{BrickRed}{\ding{55}} \\
    Self-consistency & \textcolor{BrickRed}{\ding{55}} & \textcolor{PineGreen}{\ding{52}} & \textcolor{PineGreen}{\ding{52}} \\
    Macrovertex independence & \textcolor{PineGreen}{\ding{52}} & \textcolor{PineGreen}{\ding{52}} & \textcolor{PineGreen}{\ding{52}} \\
    Macrovertex autonomy & \textcolor{PineGreen}{\ding{52}} & \textcolor{PineGreen}{\ding{52}} & \textcolor{PineGreen}{\ding{52}} \\
    Weak self-consistency & \textcolor{PineGreen}{\ding{52}} & \textcolor{PineGreen}{\ding{52}} & \textcolor{PineGreen}{\ding{52}} \\ \bottomrule
    \end{tabularx}
\end{subtable}
\end{table}

The paper has discussed the problem of ranking objects in a paired comparison-based setting, which allows for different preference intensities as well as incomplete and multiple comparisons, from a theoretical perspective. We have used five axioms for this purpose, and have analysed three scoring procedures with respect to them. Our findings are presented in Table~\ref{Table1}.

However, our main contribution is a basic impossibility result (Theorem~\ref{Theo31}). The theorem involves two axioms, one -- called independence of irrelevant matches -- posing a kind of independence concerning the order of two objects, and the other -- self-consistency -- requiring to rank objects with an obviously better performance higher.

We have also aspired to get some positive results. Domain restriction is fruitful in the case of round-robin tournaments (Proposition~\ref{Prop43}), whereas limiting the intensity and the number of preferences does not eliminate impossibility if the number of objects is meaningful (Proposition~\ref{Prop42}, but Proposition~\ref{Prop41}). Self-consistency has a natural weakening, satisfied by row sum besides independence of irrelevant matches (Proposition~\ref{Prop45}), although $SC$ seems to be the more plausible property than $IIM$.
Independence of irrelevant matches can be refined through the concept of macrovertex such that the relative ranking of two objects should not depend on an outside comparison only if the comparison multigraph have a special structure. The implied possibility theorem (Proposition~\ref{Prop44}) is more general than the positive result in the case of round-robin ranking problems (consider Corollary~\ref{Col44}).

There remains an unexplored gap between our impossibility and possibility theorems since the latter allows for more than one scoring procedure. Actually, generalized row sum and least squares methods cannot be distinguished with respect to the properties examined here, as illustrated by Table~\ref{Table1}.\footnote{~Some of their differences are highlighted by \citet{Gonzalez-DiazHendrickxLohmann2013}.}
The loss of independence of irrelevant matches makes characterizations on the general domain complicated since self-consistency is not an axiom easy to seize. Despite these challenges, axiomatic construction of scoring procedures means a natural continuation of the current research.

\section*{Acknowledgements}
\addcontentsline{toc}{section}{Acknowledgements}
\noindent
We thank \emph{S\'andor Boz\'oki} for useful advice. \\
Anonymous reviewers provided valuable comments and suggestions on earlier drafts. \\
The research was supported by OTKA grant K 111797 and by the MTA Premium Post Doctorate Research Program.


\end{document}